\title{Validating Mathematical Structures}
\author{Kazuhiko Sakaguchi}
\institute{%
University of Tsukuba, Japan\\
\texttt{sakaguchi@logic.cs.tsukuba.ac.jp}
}
\definecolor{dkblue}{rgb}{0,0.1,0.5}
\definecolor{lightblue}{rgb}{0,0.5,0.5}
\definecolor{dkgreen}{rgb}{0,0.4,0}
\definecolor{dk2green}{rgb}{0.4,0,0}
\definecolor{dkviolet}{rgb}{0.6,0,0.8}
\definecolor{dkpink}{rgb}{0.2,0,0.6}
\definecolor{redbrown}{rgb}{0.65, 0.16, 0.16}
\definecolor{commentcolor}{rgb}{0.2, 0.5, 0.5}
\lstdefinelanguage{SSR} {
%
mathescape=true,
escapeinside={[*}{*]},
texcl=false,
%
%
morekeywords=[1]{
From, Section, Module, End, Require, Import, Export, Defensive, Function,
Variable, Variables, Parameter, Parameters, Axiom, Hypothesis, Hypotheses,
Notation, Local, Global, Tactic, Reserved, Scope, Open, Close, Bind, Delimit,
Definition, Let, Ltac, Fixpoint, CoFixpoint, Add, Morphism, Relation,
Implicit, Arguments, Unset, Contextual, Strict, Prenex, Implicits,
Inductive, CoInductive, Record, Variant, Structure, Canonical, Coercion,
Theorem, Lemma, Corollary, Proposition, Fact, Remark, Example,
Proof, Goal, Save, Qed, Defined, Admitted, Abort, Restart,
Opaque, Transparent,
Hint, Resolve, Rewrite, View,
Search, Show, Print, Check, About, Locate, Printing, All, Graph, Projections,
Extraction, Extract, Inlined, Constant},
moredelim=[is][\color{dkviolet}]{|>}{<|},
moredelim=[is][\color{red}]{|*}{*|},
%
morekeywords=[2]{forall, exists, exists2, fun, fix, cofix, struct,
      match, with, end, as, in, return, let, if, is, then, else,
      for, of, nosimpl, mlet},
%
morekeywords=[3]{SProp, Prop, Set, Type},
%
morekeywords=[4]{
         pose, set, move, case, elim, apply, clear,
            hnf, intro, intros, generalize, rename, pattern, after,
	    destruct, induction, using, refine, inversion, injection,
         rewrite, congr, unlock, compute, ring, field,
            replace, fold, unfold, change, cutrewrite, simpl,
         have, gen, generally, suff, wlog, suffices, without, loss, nat_norm,
            assert, cut, trivial, revert, bool_congr, nat_congr, abstract,
	 symmetry, transitivity, auto, split, left, right, autorewrite},
%
morekeywords=[5]{
         by, done, exact, reflexivity, tauto, romega, omega,
         assumption, solve, contradiction, discriminate, congruence},
%
%
morekeywords=[6]{do, last, first, try, idtac, repeat},
%
 literate=*
	{->}{{$\rightarrow\,$}}2
	{<-}{{$\leftarrow\,$}}2
 	{>->}{{>->}}3
	{|-}{{$\vdash$}}2
 	{<=}{{$\leq$}}1
 	{>=}{{$\geq$}}1
 	{<>}{{$\neq$}}1
 	{/\\}{{$\wedge$}}2
 	{\\/}{{$\vee$}}2
 	{<->}{{$\leftrightarrow\;$}}3
 	{<=>}{{$\Leftrightarrow\;$}}3
 	{:nat}{{$~\in\mathbb{N}$}}3
	{fforall\ }{{$\forall_f\,$}}1
	{forall\ }{{$\forall\,$}}1
 	{negb}{{$\neg$}}1
 	{spp}{{:*:\,}}1
 	{~~}{{$\neg$}}1
 	{\\in}{{$\in\;$}}1
 	{/\\}{$\land\,$}1
 	{:*:}{{$*$}}2
	{=>}{{$\Rightarrow$}}2
 	{!=}{{$\neq$}\,}2
 	{^-1}{{$^{-1}$}}1
 	{elt'}{elt'}1
	{isn't }{{{\ttfamily\color{dkgreen} isn't }}}1,
%
%
showstringspaces=false,
%
morestring=[b]",
morestring=[d]",
%
morecomment=[s]{(*}{*)},
%
tabsize=3,
%
extendedchars=true,
%
sensitive=true,
%
%
%
%
%
identifierstyle={\ttfamily\color{black}},
%
keywordstyle=[1]{\ttfamily\color{dkviolet}},
%
keywordstyle=[2]{\ttfamily\color{dkgreen}},
%
keywordstyle=[3]{\ttfamily\color{redbrown}},
%
keywordstyle=[4]{\ttfamily\color{dkblue}},
%
keywordstyle=[5]{\ttfamily\color{red}},
%
%
keywordstyle=[6]{\ttfamily\color{dkpink}},
%
stringstyle=\ttfamily,
%
commentstyle={\ttfamily\color{commentcolor}},
}
\lstdefinestyle{plain}{
  basicstyle=\ttfamily,
  keywordstyle=,
  identifierstyle=,
  commentstyle=,
  stringstyle=,
  emphstyle=,
  backgroundcolor=,
  language=,
  frame=tlbr,
  framesep=0pt,
  rulecolor=\color{white},
  numbers=none,
  numberstyle=\tiny,
  xleftmargin=0pt,
  xrightmargin=0pt,
  basewidth=0.459em,
  lineskip=-.56ex,
  aboveskip=4pt,
  belowskip=-2pt,
  keepspaces,
}
\newcommand\coqinline[1][]%
\tikzset{
every node/.style={inner sep=2},
every path/.style={thick},
every picture/.style={font issue=\footnotesize},
font issue/.style={execute at begin picture={#1\selectfont}}
}
\newcommand\conv{\equiv}
\newcommand\unify{\ensuremath{\mathrel{\hat{\conv}}}}
\newcommand\leadsfrom{\mathrel{\vcenter{\hbox{\rotatebox{180}{$\leadsto$}}}}}
\newcommand\evar[1]{\ensuremath{?_{\texttt{#1}}}}
\newcommand*{\qedhere}{\strut~\hfill$\qed$\endproof}
\newcommand{\Coq}{{\sffamily Coq}}
\newcommand{\Gallina}{{\sffamily Gallina}}
\newcommand{\Ltac}{{\sffamily Ltac}}
\newcommand{\MC}{{\sffamily MathComp}}
\newcommand{\Analysis}{{\sffamily MathComp Analysis}}
\newcommand{\Lean}{{\sffamily Lean}}
\newcommand{\Matita}{{\sffamily Matita}}
\newcommand{\OCaml}{{\sffamily OCaml}}
\newcommand{\Coquelicot}{{\sffamily Coquelicot}}
\newcommand{\kp}[1]{\textcolor{red}{KP -- #1}}
\renewcommand{\kp}[1]{}
\begin{document}

\maketitle

\begin{abstract}
\kp{The abstract (and intro) flow is not at all straightforward to follow. Recall that the abstract is read at least one order of magnitude more than other parts of the paper. Hence, here is a proposal for better flow:
\begin{enumerate}
\item Mechanisms for sharing notations, definitions, and results across an inheritance hierarchy of mathematical structures, e.g., a hierarchy of monoids, groups, and rings, are important for productivity when formalizing mathematics in proof assistants.
\item The packed classes method is a generic design pattern to define and combine mathematical structures in a dependent type theory with records.
\item When combined with mechanisms for implicit coercions and canonical structures, e.g., those in the \Coq{} proof assistant, packed classes enable automatic structure inference and subtyping in hierarchies, e.g., that a ring can be used in place of a monoid.
\item However, large hierarchies based on packed classes, such as those in the \MC{} library for \Coq, are challenging to design and maintain.
\item We identify two hierarchy invariants that ensure modularity of reasoning and predictability of inference with packed classes, and propose algorithms to check those invariants.
\item We implemented our algorithms in tools for \Coq, and applied them on the \MC{} library.
\item The results show that our tools significantly improve the development process for \MC.
\end{enumerate}
}

Sharing of notations and theories across an inheritance hierarchy of mathematical structures, e.g., groups and rings, is important for productivity when formalizing mathematics in proof assistants.
The packed classes methodology is a generic design pattern to define and combine mathematical structures in a dependent type theory with records.
When combined with mechanisms for implicit coercions and unification hints, packed classes enable automated structure inference and subtyping in hierarchies, e.g., that a ring can be used in place of a group.
However, large hierarchies based on packed classes are challenging to implement and maintain.
We identify two hierarchy invariants that ensure modularity of reasoning and predictability of inference with packed classes, and propose algorithms to check these invariants.
We implement our algorithms as tools for the \Coq{} proof assistant, and show that they significantly improve the development process of {\sffamily Mathematical Components}, a library for formalized mathematics.

\end{abstract}

\kp{General problems across the text:
\begin{itemize}
\item overuse of the format ``$\cdots$; thus, $\cdots$'' --- better to split into several sentences sometimes: ``$\cdots$. Thus, $\cdots$'' --- also use other words more like ``consequently,'', ``hence,''
\item light grey color for code is not readable, especially when the paper is printed --- also consider having very distinct different colors for emphasis of code and code comments
\item instead of using the quite ugly \cite[Sect.~``Implicit Coercions'']{coqrefman}, make a different BiBTeX entry for different key chapters of the refman (if you're going to link to URLs anyway, use the specific HTML page for the chapter)
\item there needs to be a clearer distinction (or more consistent terminology) between ``provers'', ``proof assistants'', and ``systems [like Coq]'' -- it looks like sometimes they are used as synonyms, and sometimes not
\end{itemize}
}

\section{Introduction}
\label{sec:introduction}

\begin{figure*}[t]
 \centering
 \includegraphics[width=\textwidth,pagebox=cropbox]{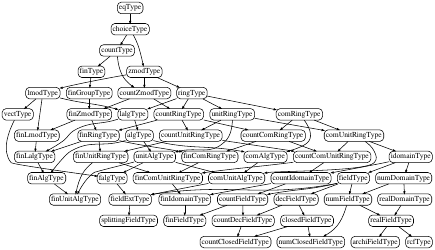}
 \caption{The hierarchy of structures in the \MC{} library 1.10.0}
 \label{fig:hierarchy}
\end{figure*}

Mathematical structures are a key ingredient of modern formalized mathematics in proof assistants, e.g.,~\cite{Geuvers:2002,Spitters:2011,Holzl:2013,forgetful-inference}\cite[Chap.~2 and Chap.~4]{Cohen:phd}\cite[Sect.~3]{Gonthier:2013}\cite[Chap.~5]{Rouhling:2019}\cite[Sect.~4]{mathlib:2020}.
Since mathematical structures have an inheritance/subtyping hierarchy such that ``a ring is a group and a group is a monoid'', it is usual practice in mathematics to reuse notations and theories of superclasses implicitly to reason about a subclass.
Similarly, the sharing of notations and theories across the hierarchy is important for productivity when formalizing mathematics.

The packed classes methodology~\cite{Garillot:2009,Garillot:2011} is a generic design pattern to define and combine mathematical structures in a dependent type theory with records. Hierarchies using packed classes support multiple inheritance, and maximal sharing notations and theories.
When combined with mechanisms for implicit coercions~\cite{Saibi:1997,Saibi:1999} and for extending unification procedure, such as the canonical structures~\cite{Saibi:1999,Mahboubi:2013} of the \Coq{} proof assistant~\cite{coqrefman}, and the unification hints~\cite{Asperti:2009} of the \Lean{} theorem prover~\cite{leanprover,deMoura:2015} and the \Matita{} interactive theorem prover~\cite{Asperti:2011}, packed classes enable subtyping and automated inference of structures in hierarchies.
Compared to approaches based on type classes~\cite{Sozeau:2008,Haftmann:2007}, packed classes are more robust, and their inference approach is efficient and \pagebreak predictable~\cite{forgetful-inference}.
The success of the packed classes methodology in formalized mathematics can be seen in the {\sffamily Mathematical Components} library~\cite{mathcomp-github} (hereafter \MC), the \Coquelicot{} library~\cite{Boldo:2015}
, and especially the formal proof of the Odd Order Theorem~\cite{Gonthier:2013}.
 It has also been successfully applied for program verification tasks, e.g., a hierarchy of monadic effects~\cite{Affeldt:2019} and a hierarchy of partial commutative monoids~\cite{fcsl-pcm} for Fine-grained Concurrent Separation Logic~\cite{Sergey:PLDI:2015}.

In spite of its success, the packed classes methodology is hard to master for library designers and requires a substantial amount of work to maintain as libraries evolve.
For instance, the strict application of packed classes requires defining quadratically many implicit coercions and unification hints in the number of structures.
To give some figures, the \MC{} library 1.10.0 uses this methodology ubiquitously to define the 51 mathematical structures depicted in Fig.~\ref{fig:hierarchy}, and declares 554 implicit coercions and 746 unification hints to implement their inheritance.
Moreover, defining new intermediate structures between existing ones requires fixing their subclasses and their inheritance accordingly; thus, it can be a challenging task.

In this paper, we indentify two hierarchy invariants concerning implicit coercions and unification hints in packed classes, and propose algorithms to check these invariants.
We implement our algorithms as tools for the \Coq{} system, evaluate our tools on a large-scale development, the \MC{} library 1.7.0, and then successfully detect and fix several inheritance bugs with the help of our tools.
The invariant concerning implicit coercions ensures the modularity of reasoning with packed classes and is also useful in other approaches, such as type classes and telescopes~\cite[Sect.~2.3]{Mahboubi:2013}, in a dependent type theory.
This invariant was proposed before as a \emph{coherence} of inheritance graphs~\cite{Barthe:1995}. \pagebreak
The invariant concerning unification hints, that we call \emph{well-formedness}, ensures the predictability of structure inference.
Our tool not only checks well-formedness, but also generates an exhaustive set of assertions for structure inference, and these assertions can be tested inside \Coq.
We state the predictability of inference as a metatheorem on a simplified model of hierarchies, that we formally prove in \Coq.

The paper is organized as follows:
Section \ref{sec:packed-classes} reviews the packed classes methodology using a running example.
Section \ref{sec:coherence} studies the implicit coercion mechanism of \Coq{}, and then presents the new coherence checking algorithm and its implementation.
Section \ref{sec:automated-structure-inference} reviews the use of canonical structures for structure inference in packed classes, and introduces the notion of well-formedness.
Section \ref{sec:formal-hierarchy} defines a simplified model of hierarchies and structure inference, and shows the metatheorem that states the predictability of structure inference.
Section \ref{sec:validating-canonical-projections} presents the well-formedness checking algorithm and its implementation.
Section \ref{sec:evaluation} evaluates our checking tools on the \MC{} library 1.7.0.
Section \ref{sec:conclusion} discusses related work and concludes the paper.
Our running example for Sect.~\ref{sec:packed-classes}, Sect.~\ref{sec:automated-structure-inference}, and Sect.~\ref{sec:validating-canonical-projections}, the formalization for Sect.~\ref{sec:formal-hierarchy}, and the evaluation script for Sect.~\ref{sec:evaluation} are available at~\cite{supplementary}.

\section{Packed classes}
\label{sec:packed-classes}

This section reviews the packed classes methodology~\cite{Garillot:2009,Garillot:2011} through an example, but elides canonical structures.
Our example is a minimal hierarchy with multiple inheritance, consisting of the following four algebraic structures (Fig.~\ref{fig:example-hierarchy}):
\\\vspace{-5ex}
\begin{wrapfigure}[19]{r}{48mm}
 \centering
 \begin{tikzpicture}[structure/.style={draw, rounded corners=1mm}, x=13mm, y=7mm]
  \node[structure] (Type)     at (0,  3) {\coqinline{Type}};
  \node[structure] (Monoid)   at (0,  2) {\coqinline{Monoid.type}};
  \node[structure] (Semiring) at (-1, 1) {\coqinline{Semiring.type}};
  \node[structure] (Group)    at (+1, 1) {\coqinline{Group.type}};
  \node[structure] (Ring)     at (0,  0) {\coqinline{Ring.type}};
  \draw[->, densely dotted] (Type)     -- (Monoid);
  \draw[->]                 (Monoid)   -- (Semiring);
  \draw[->]                 (Monoid)   -- (Group);
  \draw[->]                 (Semiring) -- (Ring);
  \draw[->]                 (Group)    -- (Ring);
 \end{tikzpicture}
 \caption{Hierarchy diagram for monoids, semirings, groups, and rings, where an arrow from $\texttt{X.type}$ to $\texttt{Y.type}$ means that \coqinline{Y} directly inherits from \coqinline{X}.
 The monoid structure is the superclass of all other structures.
 Semirings and groups directly inherit from monoids.
 Rings directly inherit from semirings and groups, and indirectly inherit from monoids.}
 \label{fig:example-hierarchy}
\end{wrapfigure}
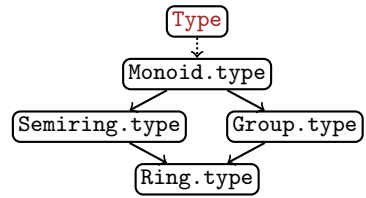
\begin{description}
 \item[Additive monoids $(A, +, 0)$:]
	    Monoids have an associative binary operation $+$ on the set $A$ and an identity element $0 \in A$.
 \item[Semirings $(A, +, 0, \times, 1)$:]
	    Semirings have the monoid axioms, commutativity of addition, multiplication, and an element $1 \in A$.
	    Multiplication $\times$ is an associative binary operation on $A$ that is left and right distributive over addition.
	    $0$ and $1$ are absorbing and identity elements with respect to multiplication, respectively.
 \item[Additive groups $(A, +, 0, -)$:]
	    Groups have the monoid axioms and a unary operation $-$ on $A$.
	    $- x$ is the additive inverse of $x$ for any $x \in A$.
 \item[Rings $(A, +, 0, -, \times, 1)$:]
	    Rings have all the semiring and group axioms, but no additional axioms.
\end{description}

We start by defining the base class, namely, the \coqinline{Monoid} structure.
\begin{coqcode}[numbers=left]
Module Monoid.

Record mixin_of (A : Type) := Mixin { [*\label{line:monoid-mixin_of}*]
  zero : A;
  add : A -> A -> A;
  addA : associative add;               (* `add' is associative.              *)
  add0x : left_id zero add;             (* `zero' is the left and right       *)
  addx0 : right_id zero add;            (*   identity element w.r.t. `add'.   *)
}.

Record class_of (A : Type) := Class { mixin : mixin_of A }. [*\label{line:monoid-class_of}*]

Structure type := Pack { sort : Type; class : class_of sort }. [*\label{line:monoid-type}*]

End Monoid.
\end{coqcode}
The above definitions are enclosed by the \coqinline{Monoid} module, which forces users to write qualified names such as \coqinline{Monoid.type}.
Thus, we can reuse the same name of record types (\coqinline{mixin_of}, \coqinline{class_of}, and \coqinline{type}), their constructors (\coqinline{Mixin}, \coqinline{Class}, and \coqinline{Pack}), and constants (e.g., \coqinline{sort} and \coqinline{class}) for other structures to indicate their roles.
Structures are written as records that have three different roles: \emph{mixins}, \emph{classes}, and \emph{structures}.
The mixin record (line \ref{line:monoid-mixin_of}) gathers operators and axioms newly introduced by the \coqinline{Monoid} structure. Since monoids do not inherit any other structure in Fig.~\ref{fig:example-hierarchy}, those are all the monoid operators, namely $0$ and $+$, and their axioms (see Appendix \ref{sec:coq-references} for details).
The class record (line \ref{line:monoid-class_of}) assembles all the mixins of the superclasses of the \coqinline{Monoid} structure (including itself), which is the singleton record consisting of the monoid mixin.
The structure record (line \ref{line:monoid-type}) is the actual interface of the structure that bundles a carrier of type \coqinline{Type} and its class instance.
\coqinline{Record} and \coqinline{Structure} are synonyms in \Coq, but we reserve the latter for actual interfaces of structures.
In a hierarchy of algebraic structures, a carrier set \coqinline{A} has type \coqinline{Type}; hence, for each structure, the first field of the \coqinline{type} record should have type \coqinline{Type}, and the \coqinline{class_of} record should be parameterized by that carrier.
In general, it can be other types, e.g., \coqinline{Type -> Type} in the hierarchy of functors and monads~\cite{Affeldt:2019}, but should be fixed in each hierarchy of structures.

Mixin and monoid records are internal definitions of mathematical structures; in contrast, the structure record is a part of the interface of the monoid structure when reasoning about monoids.
For this reason, we lift the projections for \coqinline{Monoid.mixin_of} to definitions and lemmas for \coqinline{Monoid.type} as follows.

\begin{coqcode}[alsoletter={.}, morekeywords={[7]Monoid.sort}, keywordstyle={[7]\ttfamily\color{commentcolor}}]
Definition zero {A : Monoid.type} : Monoid.sort A :=
  Monoid.zero _ (Monoid.mixin _ (Monoid.class A)).
Definition add {A : Monoid.type} :
  Monoid.sort A -> Monoid.sort A -> Monoid.sort A :=
  Monoid.add _ (Monoid.mixin _ (Monoid.class A)).
Lemma addA {A : Monoid.type} : associative (@add A).
Lemma add0x {A : Monoid.type} : left_id (@zero A) (@add A).
Lemma addx0 {A : Monoid.type} : right_id (@zero A) (@add A).
\end{coqcode}
The curly brackets enclosing \coqinline{A} mark it as an implicit argument; in contrast, \coqinline{@} is the explicit application symbol that deactivates the hiding of implicit arguments.

Since a monoid instance \coqinline{A : Monoid.type} can be seen as a type equipped with monoid axioms, it is natural to declare \coqinline{Monoid.sort} as an implicit coercion.
The types of \coqinline{zero} can be written and shown as \coqinline{forall A : Monoid.type, A} rather than \coqinline[alsoletter={.}, morekeywords={[7]Monoid.sort}, keywordstyle={[7]\ttfamily\color{commentcolor}}]{forall A : Monoid.type, Monoid.sort A} thanks to this implicit coercion.

\begin{coqcode}
Coercion Monoid.sort : Monoid.type >-> Sortclass.
\end{coqcode}

Next, we define the \coqinline{Semiring} structure.
Since semirings inherit from monoids and the semiring axioms interact with the monoid operators, e.g., distributivity of multiplication over addition, the semiring mixin should take \coqinline{Monoid.type} rather than \coqinline{Type} as its argument.

\begin{coqcode}
Module Semiring.

Record mixin_of (A : Monoid.type) := Mixin {
  one : A;
  mul : A -> A -> A;
  addC : commutative (@add A);          (* `add' is commutative.              *)
  mulA : associative mul;               (* `mul' is associative.              *)
  mul1x : left_id one mul;              (* `one' is the left and right        *)
  mulx1 : right_id one mul;             (*   identity element w.r.t. `mul'.   *)
  mulDl : left_distributive mul add;    (* `mul' is left and right            *)
  mulDr : right_distributive mul add;   (*   distributive over `add'.         *)
  mul0x : left_zero zero mul;           (* `zero' is the left and right       *)
  mulx0 : right_zero zero mul;          (*   absorbing element w.r.t. `mul'.  *)
}.
\end{coqcode}
The \coqinline{Semiring} class packs the \coqinline{Semiring} mixin together with the \coqinline{Monoid} class to assemble the mixin records of monoids and semirings.
We may also assemble all the required mixins as record fields directly rather than nesting class records, yielding what is called the \emph{flat} variant of packed classes~\cite[Sect.~4]{hierarchy-builder}.
Since the semiring mixin requires \coqinline{Monoid.type} as its type argument, we have to bundle the monoid class with the carrier to provide that \coqinline{Monoid.type} instance, as follows.
\begin{coqcode}
Record class_of (A : Type) :=
  Class { base : Monoid.class_of A; mixin : mixin_of |*(Monoid.Pack A base)*| }.

Structure type := Pack { sort : Type; class : class_of sort }.
\end{coqcode}
The inheritance from monoids to semirings can then be expressed as a canonical way to construct a monoid from a semiring as below.
\begin{coqcode}
Local Definition monoidType (cT : type) : Monoid.type :=
  Monoid.Pack (sort cT) (base _ (class cT)).

End Semiring.
\end{coqcode}

Following the above method, we declare \coqinline{Semiring.sort} as an implicit coercion, and then lift \coqinline{mul}, \coqinline{one}, and the semiring axioms from projections for the mixin to definitions for \coqinline{Semiring.type}.
\begin{coqcode}[numbers=left, alsoletter={.}, morekeywords={[7]Semiring.monoidType}, keywordstyle={[7]\ttfamily\color{commentcolor}}]
Coercion Semiring.sort : Semiring.type >-> Sortclass.
Definition one {A : Semiring.type} : A :=
  Semiring.one _ (Semiring.mixin _ (Semiring.class A)).
Definition mul {A : Semiring.type} : A -> A -> A :=
  Semiring.mul _ (Semiring.mixin _ (Semiring.class A)).
Lemma addC {A : Semiring.type} : commutative (@add (Semiring.monoidType A)). [*\label{line:addC}*]
...
\end{coqcode}
In the statement of the \coqinline{addC} axiom (line \ref{line:addC} just above), we need to explicitly write \coqinline{Semiring.monoidType A} to get the canonical \coqinline{Monoid.type} instance for \hbox{\coqinline{A : Semiring.type}}.
We omit this subtyping function \coqinline{Semiring.monoidType} by declaring it as an implicit coercion.
In general, for a structure \coqinline{S} inheriting from other structures, we define implicit coercions from \coqinline{S} to all its superclasses.
\begin{coqcode}
Coercion Semiring.monoidType : Semiring.type >-> Monoid.type.
\end{coqcode}

The \coqinline{Group} structure is monoids extended with an additive inverse.
Following the above method, it can be defined as follows.
\begin{coqcode}
Module Group.

Record mixin_of (A : Monoid.type) := Mixin {
  opp : A -> A;
  addNx : left_inverse zero opp add;    (* `opp x' is the left and right      *)
  addxN : right_inverse zero opp add;   (*   additive inverse of `x'.         *)
}.

Record class_of (A : Type) :=
  Class { base : Monoid.class_of A; mixin : mixin_of (Monoid.Pack A base) }.

Structure type := Pack { sort : Type; class : class_of sort }.

Local Definition monoidType (cT : type) : Monoid.type :=
  Monoid.Pack (sort cT) (base _ (class cT)).

End Group.

Coercion Group.sort : Group.type >-> Sortclass.
Coercion Group.monoidType : Group.type >-> Monoid.type.
Definition opp {A : Group.type} : A -> A :=
  Group.opp _ (Group.mixin _ (Group.class A)).
...
\end{coqcode}

The \coqinline{Ring} structure can be seen both as groups extended by the semiring axioms and as semirings extended by the group axioms.
Here, we define it in the first way, but one may also define it in the second way.
Since rings have no other axioms than the group and semiring axioms, no additional \coqinline{mixin_of} record is needed.\footnote{One may also define a new structure that inherits from multiple existing classes and has an extra mixin, e.g., by defining commutative rings instead of rings in this example, and left algebras as \coqinline{lalgType} in \MC.}
\begin{coqcode}
Module Ring.

Record class_of (A : Type) := Class {
  base : Group.class_of A;
  mixin : Semiring.mixin_of (Monoid.Pack A (Group.base A base)) }.

Structure type := Pack { sort : Type; class : class_of sort }.
\end{coqcode}
The ring structure inherits from monoids, groups, and semirings.
Here, we define implicit coercions from the ring structure to those superclasses.
\begin{coqcode}
Local Definition monoidType (cT : type) : Monoid.type :=
  Monoid.Pack (sort cT) (Group.base _ (base _ (class cT))).
Local Definition groupType (cT : type) : Group.type :=
  Group.Pack (sort cT) (base _ (class cT)).
Local Definition semiringType (cT : type) : Semiring.type :=
  Semiring.Pack (sort cT) (Semiring.Class _ (Group.base _ (base _ (class cT)))
                                          (mixin _ (class cT))).

End Ring.

Coercion Ring.sort : Ring.type >-> Sortclass.
Coercion Ring.monoidType : Ring.type >-> Monoid.type.
Coercion Ring.semiringType : Ring.type >-> Semiring.type.
Coercion Ring.groupType : Ring.type >-> Group.type.
\end{coqcode}


\pagebreak

\section{Coherence of implicit coercions}
\label{sec:coherence}

This section describes the implicit coercion mechanism of \Coq{} and the \emph{coherence property}~\cite{Barthe:1995} of inheritance graphs that ensures modularity of reasoning with packed classes, and presents the coherence checking mechanism we implemented in \Coq.
In Appendix \ref{sec:invariant-concrete-modularity}, we study another invariant of hierarchies concerning concrete instances and implicit coercions.
More details on implicit coercions can be found in the \Coq{} reference manual~\cite{coqrefman-coercions}, and its typing algorithm is described in~\cite{Saibi:1997}.
First, we define classes and implicit coercions.

\begin{definition}[{Classes~\cite[Sect.~3.1]{Saibi:1997}\cite{coqrefman-coercions}}]
 \label{def:coercion-classes}
 A class with $n$ parameters is a defined name $C$ with a type $\forall (x_1 : T_1) \dots (x_n : T_n), \mathit{sort}$ where $\mathit{sort}$ is \coqinline{SProp}, \coqinline{Prop}, \coqinline{Set}, or \coqinline{Type}.
 Thus, a class with parameters is considered a single class and not a family of classes.
 An object of class $C$ is any term of type $C \, t_1 \dots t_n$.
\end{definition}

\begin{definition}[Implicit coercions]
 \label{def:implicit-coercions}
 A name $f$ can be declared as an implicit coercion from a source class $C$ to a target class $D$ with $k$ parameters if the type of $f$ has the form $\forall x_1 \dots x_k \, (y : C \, t_1 \dots t_n), D \, u_1 \dots u_m$. We then write $f : C \rightarrowtail D$.\footnote{In fact, the target classes can also be functions (\coqinline{Funclass}) and sorts (\coqinline{Sortclass}); that is to say, a function returning functions, types, or propositions can be declared as an implicit coercion. In this paper, we omit these cases to simplify the presentation, but our discussion can be generalized to these cases.}
\end{definition}

An implicit coercion $f : C \rightarrowtail D$ can be seen as a subtyping $C \leq D$ and applied to fill type mismatches to a term of class $C$ placed in a context that expects to have a term of class $D$.
Implicit coercions form an inheritance graph with classes as nodes and coercions as edges, whose path $[f_1; \dots; f_n]$ where $f_i : C_i \rightarrowtail C_{i + 1}$ can also be seen as a subtyping $C_1 \leq C_{n + 1}$; thus, we write $[f_1; \dots; f_n] : C_1 \rightarrowtail C_{n + 1}$ to indicate $[f_1; \dots; f_n]$ is an inheritance path from $C_1$ to $C_{n + 1}$.
The \Coq{} system pre-computes those inheritance paths for any pair of source and target classes, and updates to keep them closed under transitivity when a new implicit coercion is declared~\cite[Sect.~3.3]{Saibi:1997}\cite[Sect.~8.2.5 ``Inheritance Graph'']{coqrefman-coercions}.
The coherence of inheritance graphs is defined as follows.

\begin{definition}[Definitional equality~{\cite{coqrefman-conversion}\cite[Sect.~3.1]{Coquand:1988}}]
Two terms $t_1$ and $t_2$ are said to be definitionally equal, or convertible, if they are equivalent under $\beta\delta\iota\zeta$-reduction and $\eta$-expansion.
 This equality is denoted by the infix symbol $\conv$.
\end{definition}

\begin{definition}[{Coherence~\cite[Sect.~3.2]{Barthe:1995}\cite[Sect.~7]{Saibi:1997}}]
 \label{def:coherence}
 An inheritance graph is coherent if and only if the following two conditions hold.
 \begin{enumerate}
  \item For any circular inheritance path $p : C \rightarrowtail C$, $p \, x \conv x$, where $x$ is a fresh variable of class $C$.
  \item For any two inheritance paths $p, q : C \rightarrowtail D$, $p \, x \conv q \, x$, where $x$ is a fresh variable of class $C$.
 \end{enumerate}
\end{definition}

\pagebreak Before our work, if multiple inheritance paths existed between the same source and target class, only the oldest one was kept as a valid one in the inheritance graph in \Coq{}, and all the others were reported as \emph{ambiguous paths} and ignored.
We improved this mechanism to report only paths that break the coherence conditions and also to minimize the number of reported ambiguous paths~\cite{Coq:PR9743,Coq:PR11258}. 
The second condition ensures the modularity of reasoning with packed classes. For example, proving $\forall (R : \text{\coqinline{Ring.type}}) \, (x, y : R), (- x) \times y = - (x \times y)$ requires using both
\coqinline[alsoletter={.}, morekeywords={[7]Semiring.monoidType,Ring.semiringType}, keywordstyle={[7]\ttfamily\color{commentcolor}}]{Semiring.monoidType (Ring.semiringType R)} and
\coqinline[alsoletter={.}, morekeywords={[7]Group.monoidType,Ring.groupType}, keywordstyle={[7]\ttfamily\color{commentcolor}}]{Group.monoidType (Ring.groupType R)}
implicitly. If those \coqinline{Monoid} instances are not definitionally equal, it will prevent us from proving the lemma by reporting type mismatch between \coqinline{R} and \coqinline{R}.

Convertibility checking for inheritance paths consisting of implicit coercions as in Definition \ref{def:implicit-coercions} requires constructing a composition of functions for a given inheritance path.
One can reduce any unification problem to this well-typed term construction problem, that in the higher-order case is undecidable~\cite{Goldfarb:1981}.
However, the inheritance paths that make the convertibility checking undecidable can never be applied as implicit coercions in type inference, because they do not respect the uniform inheritance condition.

\begin{definition}[{Uniform inheritance condition~\cite{coqrefman-coercions}\cite[Sect.~3.2]{Saibi:1997}}]
 \label{def:uniform-inheritance-condition}
 An implicit coercion $f$ between classes $C \rightarrowtail D$ with $n$ and $m$ parameters, respectively, is uniform if and only if the type of $f$ has the form
 \[
  \forall (x_1 : A_1) \dots (x_n : A_n) \, (x_{n + 1} : C \, x_1 \dots x_n), D \, u_1 \dots u_m.
 \]
\end{definition}

\begin{remark}
 Names that can be declared as implicit coercions are defined as constants that respect the uniform inheritance condition in~\cite[Sect.~3.2]{Saibi:1997}. However, the actual implementation in the modern \Coq{} system accepts almost any function as in Definition \ref{def:implicit-coercions} as a coercion.
\end{remark}

Sa\"{\i}bi claimed that the uniform inheritance condition ``ensures that any coercion can be applied to any object of its source class''~\cite[Sect.~3.2]{Saibi:1997}, but the actual condition ensures additional properties.
The number and ordering of parameters of a uniform implicit coercion are the same as those of its source class; thus, convertibility checking of uniform implicit coercions $f, g : C \rightarrowtail D$ does not require any special treatment such as permuting parameters of $f$ and $g$.
Moreover, function composition preserves this uniformity, that is, the following lemma holds.

\begin{lemma}
 \label{lem:uniform-inheritance-composition}
 For any uniform implicit coercions $f : C \rightarrowtail D$ and $g : D \rightarrowtail E$, the function composition of the inheritance path $[f; g] : C \rightarrowtail E$ is uniform.
\end{lemma}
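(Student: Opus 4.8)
The plan is to prove the lemma \emph{by construction}: I would exhibit the term witnessing the composite inheritance path $[f;g]$ and then read off its type, checking that it matches the shape demanded by Definition~\ref{def:uniform-inheritance-condition}. First I would spell out the two hypotheses. Suppose the classes $C$, $D$, $E$ have $n$, $m$, $p$ parameters respectively. Uniformity of $f$ means its type is
\[
\forall (x_1 : A_1) \dots (x_n : A_n)\,(x_{n+1} : C\,x_1 \dots x_n),\ D\,u_1 \dots u_m,
\]
and uniformity of $g$ means its type is
\[
\forall (y_1 : B_1) \dots (y_m : B_m)\,(y_{m+1} : D\,y_1 \dots y_m),\ E\,v_1 \dots v_p,
\]
where the $u_j$ may depend on $x_1, \dots, x_{n+1}$ and the $v_k$ on $y_1, \dots, y_{m+1}$.

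Next I would write the composite explicitly as
\[
[f;g] \;:=\; \lambda (x_1 : A_1) \dots (x_n : A_n)\,(x_{n+1} : C\,x_1 \dots x_n).\; g\,u_1 \dots u_m\,(f\,x_1 \dots x_n\,x_{n+1}).
\]
The key observation is that $f\,x_1 \dots x_n\,x_{n+1}$ has type $D\,u_1 \dots u_m$, i.e.\ it is an object of class $D$ whose parameters are exactly the spine $u_1, \dots, u_m$; since $g$ is uniform it accepts any object of class $D$ once its $m$ parameters are fixed, so instantiating $y_j := u_j$ for $j \le m$ and $y_{m+1} := f\,x_1 \dots x_n\,x_{n+1}$ type-checks. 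Reading off the type of $[f;g]$ then yields
\[
\forall (x_1 : A_1) \dots (x_n : A_n)\,(x_{n+1} : C\,x_1 \dots x_n),\ E\,w_1 \dots w_p,
\]
where $w_k$ is $v_k$ with each $y_j$ replaced by $u_j$ and $y_{m+1}$ replaced by $f\,x_1 \dots x_n\,x_{n+1}$. This is exactly the uniform shape for a coercion $C \rightarrowtail E$: the binders $x_1, \dots, x_n$ are precisely the $n$ parameters of the source class $C$, in the same number and order, followed by a single object binder $x_{n+1} : C\,x_1 \dots x_n$, and the codomain is $E$ applied to $p$ arguments.

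The only real content of the argument, and the step I expect to require the most care, is the bookkeeping of binders: the parameter binders of the composite are inherited \emph{entirely} from $f$, while the parameters of $g$ are \emph{consumed} by being instantiated to the spine $u_1, \dots, u_m$ coming from the codomain of $f$, rather than reappearing as fresh binders. This is exactly where both uniformity hypotheses are used. Uniformity of $f$ guarantees that its codomain is literally $D$ applied to an $m$-element spine, so the instantiation of $g$'s parameters is well-defined and in the correct order; uniformity of $g$ guarantees that it has exactly $m$ parameter binders matching those of $D$ before its object argument, so the substitution $y_j := u_j$ is well-typed. Without these conditions the parameter counts and orderings would fail to line up and the composite would not abstract over precisely the parameters of $C$. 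The dependency of the $v_k$ on the $y$'s is harmless, being absorbed into the substitution defining the $w_k$ and introducing no new binders.
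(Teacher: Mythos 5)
Your proposal is correct and follows essentially the same route as the paper: both exhibit the composite $\lambda x_1 \dots x_n\,x_{n+1}.\ g\,u_1 \dots u_m\,(f\,x_1 \dots x_n\,x_{n+1})$, read off its type as $E$ applied to the $v_k$ with $g$'s binders substituted by the spine $u_1,\dots,u_m$ and by $f\,x_1\dots x_n\,x_{n+1}$, and observe that the resulting binder structure matches the uniform shape for a coercion $C \rightarrowtail E$. The added remarks on where each uniformity hypothesis is used are consistent with the paper's (terser) presentation.
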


\begin{proof}
 Let us assume that $C$, $D$, and $E$ are classes with $n$, $m$, and $k$ parameters respectively, and $f$ and $g$ have the following types:
 \begin{align*}
  f &: \forall (x_1 : T_1) \dots (x_n : T_n) \, (x_{n + 1} : C \, x_1 \dots x_n), D \, u_1 \dots u_m, \\
  g &: \forall (y_1 : U_1) \dots (y_m : U_m) \, (y_{m + 1} : D \, y_1 \dots y_m), E \, v_1 \dots v_k.
 \end{align*}
 \pagebreak

 \noindent Then, the function composition of $f$ and $g$ can be defined and typed as follows:
 \begin{align*}
  g \circ f
  &:= \lambda (x_1 : T_1) \dots (x_n : T_n) \, (x_{n + 1} : C \, x_1 \dots x_n), g \, u_1 \dots u_m \, (f \, x_1 \dots x_n \, x_{n + 1}) \\
  &: \forall (x_1 : T_1) \dots (x_n : T_n) \, (x_{n + 1} : C \, x_1 \dots x_n), \\
  &\phantom{{}:{}} E \, (v_1 \{y_1 / u_1\} \dots \{y_m / u_m\} \{y_{m + 1} / f \, x_1 \dots x_n \, x_{n + 1}\}) \\
  &\phantom{{}: E \,\,\,\,} \smash{\vdots} \\
  &\phantom{{}: E \,} (v_k \{y_1 / u_1\} \dots \{y_m / u_m\} \{y_{m + 1} / f \, x_1 \dots x_n \, x_{n + 1}\}).
 \end{align*}
 The terms $u_1, \dots, u_m$ contain the free variables $x_1, \dots, x_{n + 1}$ and we omitted substitutions for them by using the same names for the binders in the above definition.
 Nevertheless, $(g \circ f) : C \rightarrowtail E$ respects the uniform inheritance condition. \qedhere
\end{proof}

In the above definition of the function composition $g \circ f$ of implicit coercions, the types of $x_1, \dots, x_n, x_{n + 1}$ and the parameters of $g$ can be automatically inferred in \Coq; thus, it can be abbreviated as follows:
\[
 g \circ f := \lambda (x_1 : \_) \dots (x_n : \_) \, (x_{n + 1} : \_), g \, \underbrace{\_ \, \dots \, \_}_{\mathclap{\text{$m$ parameters}}} \, (f \, x_1 \dots x_n \, x_{n + 1}).
\]

For implicit coercions $f_1 : C_1 \rightarrowtail C_2, f_2 : C_2 \rightarrowtail C_3, \dots, f_n : C_n \rightarrowtail C_{n + 1}$ that have $m_1, m_2, \dots, m_n$ parameters respectively, the function composition of the inheritance path $[f_1; f_2; \dots; f_n]$ can be written as follows by repeatedly applying Lemma \ref{lem:uniform-inheritance-composition} and the above abbreviation.
\begin{align*}
 f_n \circ \dots \circ f_2 \circ f_1 :={}
 &\lambda (x_1 : \_) \dots (x_{m_1} : \_) \, (x_{m_1 + 1} : \_), \\
 &\qquad f_n \, \underbrace{\_ \, \dots \, \_}_{\mathclap{\text{$m_n$ parameters}}} \,
         (\dots(f_2 \, \underbrace{\_ \, \dots \, \_}_{\mathclap{\text{$m_2$ parameters}}} \, (f_1 \, x_1 \dots x_{m_1} \, x_{m_1 + 1}))\dots).
\end{align*}
If $f_1, \dots, f_n$ are all uniform, the numbers of their parameters $m_1, \dots, m_n$ are equal to the numbers of parameters of $C_1, \dots, C_n$.
Consequently, the type inference algorithm always produces the typed closed term of most general function composition of $f_1, \dots, f_n$ from the above term.
If not all of $f_1, \dots, f_n$ are uniform, type inference may fail or produce an open term, but if this produces a typed closed term, it is the most general function composition of $f_1, \dots, f_n$.

Our coherence checking mechanism constructs the function composition of $p : C \rightarrowtail C$ and compares it with the identity function of class $C$ to check the first condition, and also constructs the function compositions of $p, q : C \rightarrowtail D$ and performs the conversion test for them to check the second condition.

\section{Automated structure inference}
\label{sec:automated-structure-inference}

This section reviews how the automated structure inference mechanism~\cite{Mahboubi:2013} works on our example and in general.
The first example is $0 + 1$, whose desugared form is \coqinline{@add _ (@zero _) (@one _)}, where holes \coqinline{_} stand for implicit pieces of information to be inferred.
The left- and right-hand sides of the top application can be type-checked without any use of canonical structures, as follows:
\begin{coqcode}
$\evar{M}$ : Monoid.type |- @add $\evar{M}$ (@zero $\evar{M}$) : Monoid.sort $\evar{M}$ -> Monoid.sort $\evar{M}$,
$\evar{SR}$ : Semiring.type |- @one $\evar{SR}$ : Semiring.sort $\evar{SR}$,
\end{coqcode}
\pagebreak where $\evar{M}$ and $\evar{SR}$ represent unification variables.
Type-checking the application requires solving a unification problem $\text{\coqinline{Monoid.sort}} ~ \evar{M} \unify \text{\coqinline{Semiring.sort}} ~ \evar{SR}$, which is not trivial and which \Coq{} does not know how to solve without hints.
By declaring \coqinline{Semiring.monoidType : Semiring.type -> Monoid.type} as a canonical instance, \Coq{} can become aware of that instantiating $\evar{M}$ with \linebreak \coqinline{Semiring.monoidType $\evar{SR}$} is the canonical solution to this unification problem.
\begin{coqcode}
Canonical Semiring.monoidType.
\end{coqcode}
The \coqinline{Canonical} command takes a definition with a body of the form $\lambda x_1 \, \dots \, x_n,$ $\{|p_1 := (f_1 \dots); \dots; p_m := (f_m \dots)|\}$, and then synthesizes unification hints between the projections $p_1, \dots, p_m$ and the head symbols $f_1, \dots, f_m$, respectively, except for unnamed projections.
Since \coqinline{Semiring.monoidType} has the following body, the above \coqinline{Canonical} declaration synthesizes the unification hint between \coqinline{Monoid.sort} and \coqinline{Semiring.sort} that we need:
\begin{coqcode}
fun cT : Semiring.type =>
  {| |*Monoid.sort*| := |*Semiring.sort*| cT;
     Monoid.class := Semiring.base cT (Semiring.class cT) |}.
\end{coqcode}

In general, for any structures \coqinline{A} and \coqinline{B} such that \coqinline{B} inherits from \coqinline{A} with an implicit coercion \coqinline{B.aType : B.type >-> A.type}, \coqinline{B.aType} should be declared as a canonical instance to allow \Coq{} to solve unification problems of the form $\text{\coqinline{A.sort}} ~ \evar{A} \unify \text{\coqinline{B.sort}} ~ \evar{B}$ by instantiating $\evar{A}$ with \coqinline{B.aType $\evar{B}$}.

The second example is $- 1$, whose desugared form is \coqinline{@opp _ (@one _)}.
The left- and right-hand sides of the top application can be type-checked as follows:
\begin{coqcode}
$\evar{G}$ : Group.type |- @opp $\evar{G}$ : Group.sort $\evar{G}$ -> Group.sort $\evar{G}$,
$\evar{SR}$ : Semiring.type |- @one $\evar{SR}$ : Semiring.sort $\evar{SR}$.
\end{coqcode}
In order to type check the application, \Coq{} has to unify $\text{\coqinline{Group.sort}} ~ \evar{G}$ with $\text{\coqinline{Semiring.sort}} ~ \evar{SR}$, which, again, is not trivial.
Moreover, groups and semirings do not inherit from each other; therefore, this case is not an instance of the above criteria to define canonical instances.
Nevertheless, this unification problem means that \coqinline{$\evar{G}$ : Group.type} and \coqinline{$\evar{SR}$ : Semiring.type} are the same, and they are equipped with both group and semiring axioms, that is, the ring structure. Thus, its canonical solution should be introducing a fresh unification variable \coqinline{$\evar{R}$ : Ring.type} and instantiating $\evar{G}$ and $\evar{SR}$ with \coqinline{Ring.groupType $\evar{R}$} and \coqinline{Ring.semiringType $\evar{R}$}, respectively.
Right after defining \coqinline{Ring.semiringType}, this unification hint can be defined as follows.

\begin{coqcode}
Local Definition semiring_groupType (cT : type) : Group.type :=
  Group.Pack (|*Semiring.sort*| (semiringType cT)) (base _ (class cT)).
\end{coqcode}
This definition is definitionally equal to \coqinline{Ring.groupType}, but has a different head symbol, \coqinline{Semiring.sort} instead of \coqinline{Ring.sort}, in its first field \coqinline{Group.sort}.
Thus, the unification hint we need between \coqinline{Group.sort} and \coqinline{Semiring.sort} can be synthesized by the following declarations.
\begin{coqcode}
Canonical Ring.semiring_groupType.
\end{coqcode}
This unification hint can also be defined conversely as follows.
Whichever of those is acceptable, but at least one of them should be declared.
\begin{coqcode}
Local Definition group_semiringType (cT : type) : Semiring.type :=
  Semiring.Pack (Group.sort (groupType cT))
    (Semiring.Class _ (Group.base _ (base _ (class cT))) (mixin _ (class cT))).
\end{coqcode}

\pagebreak For any structures $A$ and $B$ that have common (non-strict) subclasses $\mathcal{C}$, we say that $C \in \mathcal{C}$ is a \emph{join} of $A$ and $B$ if $C$ does not inherit from any other structures in $\mathcal{C}$.
For example, if we add the structure of commutative rings to the hierarchy of Sect.~\ref{sec:packed-classes}, the commutative ring structure is a common subclass of the group and semiring structures, but is not a join of them because the commutative ring structure inherits from the ring structure which is another common subclass of them.
In general, the join of any two structures must be unique, and we should declare a canonical instance to infer the join $C$ as the canonical solution of unification problems of the form $\text{\coqinline{A.sort}} ~ \evar{A} \unify \text{\coqinline{B.sort}} ~ \evar{B}$.
For any structures $A$ and $B$ such that $B$ inherits from $A$, $B$ is the join of $A$ and $B$; thus, the first criteria to define canonical instances is just an instance of the second criteria.

\begin{figure}[t]
 \newlength{\figwidth}
 \setlength{\figwidth}{\linewidth}
 \addtolength{\figwidth}{-\columnsep}
 \setlength{\figwidth}{.5\figwidth}
 \begin{minipage}[t]{\figwidth}
  \centering
  \begin{tikzpicture}[structure/.style={draw, rounded corners=1mm}, x=15mm, y=7mm]
   \node[structure] (Type) at (0,  3) {\coqinline{Type}};
   \node[structure] (A)    at (-1, 2) {\coqinline{A.type}};
   \node[structure] (B)    at (1,  2) {\coqinline{B.type}};
   \node[structure] (C)    at (-1, 0) {\coqinline{C.type}};
   \node[structure] (D)    at (1,  0) {\coqinline{D.type}};
   \draw[->, densely dotted] (Type) -- (A);
   \draw[->, densely dotted] (Type) -- (B);
   \draw[->] (A) -- (C);
   \draw[->] (A) -- (D);
   \draw[->] (B) -- (C);
   \draw[->] (B) -- (D);
  \end{tikzpicture}
  \caption{A minimal hierarchy that has ambiguous joins. Both structure \coqinline{C} and \coqinline{D} directly inherit from the structures \coqinline{A} and \coqinline{B}; thus, \coqinline{A} and \coqinline{B} have two joins.}
  \label{fig:ambiguous-hierarchy}
 \end{minipage}\hspace{\columnsep}%
 \begin{minipage}[t]{\figwidth}
  \centering
  \begin{tikzpicture}[structure/.style={draw, rounded corners=1mm}, x=15mm, y=7mm]
   \node[structure] (Type) at (0,  3) {\coqinline{Type}};
   \node[structure] (A)    at (-1, 2) {\coqinline{A.type}};
   \node[structure] (B)    at (1,  2) {\coqinline{B.type}};
   \node[structure] (Join) at (0,  1) {\coqinline{join(A, B).type}};
   \node[structure] (C)    at (-1, 0) {\coqinline{C.type}};
   \node[structure] (D)    at (1,  0) {\coqinline{D.type}};
   \draw[->, densely dotted] (Type) -- (A);
   \draw[->, densely dotted] (Type) -- (B);
   \draw[->] (A) -- (Join);
   \draw[->] (B) -- (Join);
   \draw[->] (Join) -- (C);
   \draw[->] (Join) -- (D);
  \end{tikzpicture}
  \caption{A hierachy that disambiguates the join of \coqinline{A} and \coqinline{B} in Fig.~\ref{fig:ambiguous-hierarchy} by redefining \coqinline{C} and \coqinline{D} to inherit from a new structure \coqinline{join(A, B)} that inherits from \coqinline{A} and \coqinline{B}.}
  \label{fig:disambiguated-hierarchy}
 \end{minipage}
\end{figure}

Fig.~\ref{fig:ambiguous-hierarchy} shows a minimal hierarchy that has ambiguous joins.
If we declare that \coqinline{C} (resp.~\coqinline{D}) is the canonical join of \coqinline{A} and \coqinline{B} in this hierarchy, it will also be accidentally inferred for a user who wants to reason about \coqinline{D} (resp.~\coqinline{C}).
Since \coqinline{C} and \coqinline{D} do not inherit from each other, inferred \coqinline{C} (resp.~\coqinline{D}) can never be instantiated with \coqinline{D} (resp.~\coqinline{C}); therefore, we have to disambiguate it as in Fig.~\ref{fig:disambiguated-hierarchy}, so that the join of \coqinline{A} and \coqinline{B} can be specialized to both \coqinline{C} and \coqinline{D} afterwards.

\section{A simplified formal model of hierarchies}
\label{sec:formal-hierarchy}

In this section, we define a simplified formal model of hierarchies and show a metatheorem that ensures the predictability of structure inference. First, we define the model of hierarchies and inheritance relations.

\begin{definition}[Hierarchy and inheritance relations]
 \label{def:hierarchy}
 A hierarchy $\mathcal{H}$ is a finite set of structures partially ordered by a non-strict inheritance relation $\leadsto^*$; that is, $\leadsto^*$ is reflexive, antisymmetric, and transitive.
 We denote the corresponding strict (irreflexive) inheritance relation by $\leadsto^+$. $A \leadsto^* B$ and $A \leadsto^+ B$ respectively mean that $B$ non-strictly and strictly inherits from $A$.
\end{definition}


\begin{definition}[Common subclasses]
 \label{def:common-subclasses}
 The (non-strict) common subclasses of $A, B \in \mathcal{H}$ are $\mathcal{C} := \{C \in \mathcal{H} \mid A \leadsto^* C \land B \leadsto^* C\}$.
 The minimal common subclasses of $A$ and $B$ is $\mathrm{mcs}(A, B) := \mathcal{C} \setminus \{C \in \mathcal{H} \mid \exists C' \in \mathcal{C}, C' \leadsto^+ C\}$.
\end{definition}

\pagebreak

\begin{definition}[Well-formed hierarchy]
 \label{def:well-formed-hierarchy}
 A hierarchy $\mathcal{H}$ is said to be well-formed if the minimal common subclasses of any two structures are unique; that is, $|\mathrm{mcs}(A, B)| \leq 1$ for any $A, B \in \mathcal{H}$.
\end{definition}

\begin{definition}[Extended hierarchy]
 \label{def:extended-hierarchy}
 An extended hierarchy $\bar{\mathcal{H}} := \mathcal{H} \mathbin{\dot{\cup}} \{\top\}$ is a hierarchy $\mathcal{H}$ extended with $\top$ which means a structure that strictly inherits from all the structures in $\mathcal{H}$; thus, the inheritance relation is extended as follows:
 \begin{align*}
  A \mathrel{\bar{\leadsto}}^* \top &\iff \mathrm{true}, \\
  \top \mathrel{\bar{\leadsto}}^* B &\iff \mathrm{false} & \text{(if $B \neq \top$)}, \\
  A \mathrel{\bar{\leadsto}}^* B    &\iff A \leadsto^* B & \text{(if $A \neq \top$ and $B \neq \top$)}.
 \end{align*}
\end{definition}

\begin{definition}[Join]
 \label{def:unique-join}
 The join is a binary operator on an extended well-formed hierarchy $\bar{\mathcal{H}}$, defined as follows:
 \[
  \mathrm{join}(A, B) =
  \begin{cases}
   C    & \text{(if $A, B \in \mathcal{H}$ and $\mathrm{mcs}(A, B) = \{C\}$)}, \\
   \top & \text{(otherwise)}.
  \end{cases}
 \]
\end{definition}

We encoded the above definitions on hierarchies in \Coq{} by using the structure of partially ordered finite types \coqinline{finPOrderType} of the \texttt{mathcomp-finmap} library~\cite{mathcomp-finmap-github} and proved the following theorem.

\begin{theorem}
 \label{thm:join-AC}
 The join operator on an extended well-formed hierarchy is associative, commutative, and idempotent; that is, an extended well-formed hierarchy is a join-semilattice.
\end{theorem}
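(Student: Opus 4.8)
The plan is to show that $\mathrm{join}(A, B)$ coincides with the supremum (least upper bound) of $A$ and $B$ with respect to the extended inheritance order $\mathrel{\bar{\leadsto}}^*$, and then to invoke the standard order-theoretic fact that a poset in which every pair of elements admits a supremum is exactly a join-semilattice, with the supremum being associative, commutative, and idempotent. Since the paper already records that $\bar{\mathcal{H}}$ carries a finite partial order structure, the three algebraic laws will follow uniformly once the supremum characterization is established, rather than being verified by three separate case analyses on the definition of $\mathrm{join}$.

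First I would observe that, for $A, B \in \mathcal{H}$, the set of upper bounds of $\{A, B\}$ in $\bar{\mathcal{H}}$ is exactly $\mathcal{C} \cup \{\top\}$, where $\mathcal{C}$ is the set of common subclasses from Definition~\ref{def:common-subclasses}, because $\top$ lies above everything. The central lemma is then that, in a well-formed hierarchy, a unique minimal common subclass is in fact the \emph{least} common subclass: if $\mathrm{mcs}(A, B) = \{C\}$ (which, by finiteness and well-formedness, is precisely the case $\mathcal{C} \neq \emptyset$), then $C \leadsto^* C'$ for every $C' \in \mathcal{C}$. To prove this I would use finiteness of $\mathcal{H}$: in a finite poset every element of a nonempty subset lies above some minimal element of that subset (descend along $\leadsto^+$, which terminates by finiteness). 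Applying this to $\mathcal{C}$, every $C' \in \mathcal{C}$ dominates some element of $\mathrm{mcs}(A, B)$; well-formedness forces that element to be the unique $C$, whence $C \leadsto^* C'$. Thus $C$ is the minimum of $\mathcal{C} \cup \{\top\}$ and therefore the supremum. In the remaining cases, namely $\mathcal{C} = \emptyset$, or $A = \top$, or $B = \top$, the only upper bound is $\top$, so the supremum is $\top$, matching the ``otherwise'' branch of $\mathrm{join}$.

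With the equality $\mathrm{join}(A, B) = \sup\{A, B\}$ in hand, I would finish via the universal property of suprema, $\sup\{A,B\} \mathrel{\bar{\leadsto}}^* X \iff A \mathrel{\bar{\leadsto}}^* X \land B \mathrel{\bar{\leadsto}}^* X$. Commutativity is immediate since $\{A, B\} = \{B, A\}$, and idempotency follows from $\sup\{A, A\} = A$. For associativity, both $\mathrm{join}(\mathrm{join}(A,B),C)$ and $\mathrm{join}(A,\mathrm{join}(B,C))$ satisfy $X \mathrel{\bar{\leadsto}}^* W \iff A \mathrel{\bar{\leadsto}}^* W \land B \mathrel{\bar{\leadsto}}^* W \land C \mathrel{\bar{\leadsto}}^* W$, so they agree by antisymmetry. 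I expect the main obstacle to be exactly the central lemma: the definition of $\mathrm{join}$ is phrased through a \emph{minimal} common subclass, whereas the supremum argument needs a \emph{least} one, and bridging this gap is precisely where both finiteness of the hierarchy and the well-formedness hypothesis $|\mathrm{mcs}(A, B)| \le 1$ become indispensable. The $\top$ bookkeeping and the check that $\mathrel{\bar{\leadsto}}^*$ is a partial order are routine by comparison.
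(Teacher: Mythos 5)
Your proposal is correct and takes essentially the same route as the paper: the paper likewise establishes that $\mathrm{join}(A,B)$ is the least upper bound in $\bar{\mathcal{H}}$ (Lemmas~\ref{lem:join-inherit-lr} and~\ref{lem:le-joinE}, the latter proven by well-founded induction on the finite hierarchy using well-formedness, which is exactly your descent argument bridging ``unique minimal'' to ``least'') and then derives associativity from this universal property together with antisymmetry. The only cosmetic difference is that the paper proves commutativity and idempotency directly from the definition of $\mathrm{mcs}$ rather than reading them off the supremum characterization.
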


\begin{proof}
 See Appendix \ref{sec:formal-hierarchy-proofs}. \qedhere
\end{proof}

If the unification algorithm of \Coq{} respects our model of hierarchies and joins during structure inference, Theorem \ref{thm:join-AC} implies that permuting, duplicating, and contracting unification problems do not change the result of inference; thus, it states the predictability of structure inference at a very abstract level.

\section{Validating well-formedness of hierarchies}
\label{sec:validating-canonical-projections}

This section presents a well-formedness checking algorithm that can also generate the exhaustive set of assertions for joins.
We implemented this checking mechanism as a tool \texttt{hierarchy.ml} written in \OCaml, which is available as a \MC{} developer utility~\cite[\texttt{/etc/utils/hierarchy.ml}]{mathcomp-github}.
Our checking tool outputs the assertions as a \Coq{} proof script which can detect missing and misimplemented unification hints for joins.

\begin{algorithm}[t]
 \DontPrintSemicolon
 \SetKwInOut{Params}{Parameters}
 \SetKwFunction{Join}{join}
 \SetKwFunction{Sub}{subof}
 \SetKwProg{Fn}{Function}{:}{}
 \SetKw{Fail}{fail}
 \SetKw{Print}{print}

 \Params{$\mathcal{H}$ is the set of all the structures. \Sub is the map from structures to their strict subclasses, which is required to be transitive: $\forall A \, B \in \mathcal{H}, A \in \Sub(B) \Rightarrow \Sub(A) \subset \Sub(B)$.}

 \BlankLine
 \Fn{\Join{$A$, $B$}}{
  $\mathcal{C} := (\Sub{$A$} \cup \{A\}) \cap (\Sub{$B$} \cup \{B\})$;

  \tcc*[f]{$\mathcal{C}$ is the set of all the common subclasses of $A$ and $B$.}

  \lForEach{$C \in \mathcal{C}$}{$\mathcal{C} \leftarrow \mathcal{C} \setminus \Sub{$C$}$;}
  \tcc*[f]{Since \Sub is transitive, removed elements of $\mathcal{C}$ can be skipped in this loop, and the ordering of picking elements from $\mathcal{C}$ does not matter.}

  \lIf{$\mathcal{C} = \emptyset$}{\Return{$\top$}; \tcc*[f]{There is no join of $A$ and $B$.}}
  \lElseIf{$\mathcal{C}$ is singleton $\{C\}$}{\Return{$C$}; \tcc*[f]{$C$ is the join of $A$ and $B$.}}
  \lElse{\Fail; \tcc*[f]{The join of $A$ and $B$ is ambiguous.}}
 }

 \BlankLine
 \ForEach{$A \in \mathcal{H}, B \in \mathcal{H}$}{
  $C := \Join{$A$, $B$}$;
  \lIf{$A \neq B \land C \neq \top$}{\Print ``\texttt{check\_join} $A$ $B$ $C$.'';}
 }
 \caption{Well-formedness checking and assertions generation}
 \label{alg:hierarchy-checking}
\end{algorithm}

Since a hierarchy must be a finite set of structures (Definition \ref{def:hierarchy}), Definitions \ref{def:common-subclasses}, \ref{def:unique-join}, and \ref{def:well-formed-hierarchy} give us computable (yet inefficient) descriptions of joins and the well-formedness; in other words, for a given hierarchy $\mathcal{H}$ and any two structures $\text{\coqinline{A}}, \text{\coqinline{B}} \in \mathcal{H}$, one may enumerate their minimal common subclasses.
Algorithm \ref{alg:hierarchy-checking} is the checking algorithm we present, that takes in input an inheritance relation in the form of an indexed family of strict subclasses $\texttt{subof}(A) := \{B \in \mathcal{H} \mid A \leadsto^+ B\}$.
The \texttt{join} function in this algorithm takes two structures as arguments, checks the uniqueness of their join, and then returns the join if it uniquely exists. \pagebreak
In this function, the enumeration of minimal common subclasses is done by constructing the set of common subclasses $\mathcal{C}$, and filtering out $\texttt{subof}(C)$ from $\mathcal{C}$ for every $C \in \mathcal{C}$.
In this filtering process, which is written as a \textbf{foreach} statement, we can skip elements already filtered out and do not need to care about ordering of picking up elements, thanks to transitivity.

The \texttt{hierarchy.ml} utility extracts the inheritance relation from a \Coq{} library by interacting with \texttt{coqtop}, and then executes Algorithm \ref{alg:hierarchy-checking} to check the well-formedness and to generate assertions.
The assertions generated from our running example are shown below.
\begin{coqcode}[numbers=left]
check_join Group.type Monoid.type Group.type. [*\label{line:check_join-group-monoid}*]
check_join Group.type Ring.type Ring.type. [*\label{line:check_join-group-ring}*]
check_join Group.type Semiring.type Ring.type. [*\label{line:check_join-group-semiring}*]
check_join Monoid.type Group.type Group.type. [*\label{line:check_join-monoid-group}*]
check_join Monoid.type Ring.type Ring.type. [*\label{line:check_join-monoid-ring}*]
check_join Monoid.type Semiring.type Semiring.type. [*\label{line:check_join-monoid-semiring}*]
check_join Ring.type Group.type Ring.type. [*\label{line:check_join-ring-group}*]
check_join Ring.type Monoid.type Ring.type. [*\label{line:check_join-ring-monoid}*]
check_join Ring.type Semiring.type Ring.type. [*\label{line:check_join-ring-semiring}*]
check_join Semiring.type Group.type Ring.type. [*\label{line:check_join-semiring-group}*]
check_join Semiring.type Monoid.type Semiring.type. [*\label{line:check_join-semiring-monoid}*]
check_join Semiring.type Ring.type Ring.type. [*\label{line:check_join-semiring-ring}*]
\end{coqcode}
An assertion \coqinline{check_join t1 t2 t3} asserts that the join of \coqinline{t1} and \coqinline{t2} is \coqinline{t3}, and \coqinline{check_join} is implemented as a tactic that fails if the assertion is false.
For instance, if we do not declare \coqinline{Ring.semiringType} as a canonical instance, the assertion of line \ref{line:check_join-ring-semiring} fails and reports the following error.
\begin{code}
There is no join of Ring.type and Semiring.type but it is expected to be
Ring.type.
\end{code}

One may declare incorrect canonical instances that overwrite an existing join.
For example, the join of groups and monoids must be groups; however, defining the following canonical instance in the \coqinline{Ring} section overwrites this join.
\begin{coqcode}
Local Definition bad_monoid_groupType : Group.type :=
  Group.Pack (Monoid.sort monoidType) (base _ class).
\end{coqcode}
\pagebreak By declaring \coqinline{Ring.bad_monoid_groupType} as a canonical instance, the join of \coqinline{Monoid.type} and \coqinline{Group.type} is still \coqinline{Group.type}, but the join of \coqinline{Group.type} and \coqinline{Monoid.type} becomes \coqinline{Ring.type}, because of asymmetry of the unification mechanism.
The assertion of line \ref{line:check_join-group-monoid} fails and reports the following error.
\begin{code}
The join of Group.type and Monoid.type is Ring.type but it is expected to be
Group.type.
\end{code}

\section{Evaluation}
\label{sec:evaluation}

This section reports the results of applying our tools to the \MC{} library 1.7.0 and on recent development efforts to extend the hierarchy in \MC{} using our tools.
For further details, see Appendix \ref{sec:evolution-mathcomp}.
\MC{} 1.7.0 provides the structures depicted in Fig.~\ref{fig:hierarchy}, except \coqinline{comAlgType} and \coqinline{comUnitAlgType}, and lacked a few of the edges; thus, its hierarchy is quite large.
Our coherence checking mechanism found 11 inconvertible multiple inheritance paths in the \textsf{ssralg} library.
Fortunately, those paths concern proof terms and are intended to be irrelevant; hence, we can ensure no implicit coercion breaks the modularity of reasoning.
Our well-formedness checking tool discovered 7 ambiguous joins, 8 missing unification hints, and one overwritten join due to an incorrect declaration of a canonical instance.
These inheritance bugs were found and fixed with the help of our tools; thus, similar issues cannot be found in the later versions of \MC{}.

The first issue was that inheritance from the \coqinline{CountRing} structures \linebreak (\coqinline{countZmodType} and its subclasses with the prefix \coqinline{count}) to the \coqinline{FinRing} structures (\coqinline{finZmodType} and its subclasses) was not implemented, and consequently it introduced 7 ambiguous joins.
For instance, \coqinline{finZmodType} did not inherit from \coqinline{countZmodType} as it should; consequently, they became ambiguous joins of \coqinline{countType} and \coqinline{zmodType}.
6 out of 8 missing unification hints should infer \coqinline{CountRing} or \coqinline{FinRing} structures.
The other 2 unification hints are in numeric field (\coqinline{numFieldType} and \coqinline{realFieldType}) structures.
Fixing the issue of missing inheritance from \coqinline{CountRing} to \coqinline{FinRing} was a difficult task without tool support.
The missing inheritance itself was a known issue from before our tooling work, but the sub-hierarchy consisting of the \coqinline{GRing}, \coqinline{CountRing}, and \coqinline{FinRing} structures in Fig.~\ref{fig:hierarchy} is quite dense; as a result, it prevents the library developers from enumerating joins correctly without automation~\cite{MathComp:PR291}.

The second issue was that the following canonical \coqinline{finType} instance for \coqinline{extremal_group} overwrote the join of \coqinline{finType} and \coqinline{countType}, which should be \coqinline{finType}.
\begin{coqcode}
Canonical extremal_group_finType := FinType _ extremal_group_finMixin.
\end{coqcode}
In this declaration, \coqinline{FinType} is a packager~\cite[Sect.~7]{Mahboubi:2013} (see Appendix \ref{sec:invariant-concrete-modularity}) that takes a type \coqinline{T} and a \coqinline{Finite} mixin of \coqinline{T} as its arguments and construct a \coqinline{finType} instance from the given mixin and the canonical \coqinline{countType} instance for \coqinline{T}. 
However, if one omits its first argument \coqinline{T} with a placeholder as in the above, the packager may behave unpredictably as a unification hint.
In the above case, the placeholder was instantiated with \coqinline{extremal_group_countType} by type inference; as a result, it incorrectly overwrote the join of \coqinline{finType} and \coqinline{countType}.

\pagebreak

Our tools can also help finding inheritance bugs when extending the hierarchy of \MC, improve the development process by reducing the reviewing and maintenance burden, and allow developers and contributors to focus better on mathematical contents and other design issues.
For instance, Hivert~\cite{MathComp:PR406} added new structures of commutative algebras and redefined the field extension and splitting field structures to inherit from them.
In this extension process, he fixed some inheritance issues with help from us and our tools; at the same time, we made sure there is no inheritance bug without reviewing the whole boilerplate code of structures.
We ported the \textsf{order} sub-library of the \texttt{mathcomp-finmap} library~\cite{mathcomp-finmap-github} to \MC{}, redefined numeric domain structures~\cite[Chap.~4]{Cohen:phd}\cite[Sect.~3.1]{Cohen:2012} to inherit from ordered types, and factored out the notion of norms and absolute values as normed Abelian groups~\cite[Sect.~4.2]{forgetful-inference} with the help of our tools~\cite{MathComp:PR270,MathComp:PR453}.
This modification resulted in approximately 10,000 lines of changes; thus, reducing the reviewing burden was an even more critical issue.
This work is motivated by an improvement of the \Analysis{} library~\cite{analysis-github}\cite[Part II]{Rouhling:2019}, which extends the hierarchy of \MC{} with some algebraic and topological structures~\cite[Sect.~4]{forgetful-inference}\cite[Chap.~5]{Rouhling:2019} and is another application of our tools.

\section{Conclusion and related work}
\label{sec:conclusion}

This paper has provided a thorough analysis of the packed classes methodology, introduced two invariants that ensure the modularity of reasoning and the predictability of structure inference, and presented systematic ways to check those invariants.
We implemented our invariant checking mechanisms as a part of the \Coq{} system and a tool bundled with \MC.
With the help of these tools, many inheritance bugs in \MC{} have been found and fixed. The \MC{} development process has also been improved significantly.

\Coq{} had no coherence checking mechanism before our work. Sa\"{\i}bi~\cite[Sect.~7]{Saibi:1997} claimed that the coherence property ``is too restrictive in practice'' and ``it is better to replace conversion by Leibniz equality to compare path coercions because Leibniz equality is a bigger relation than conversion''.
However, most proof assistants based on dependent type theories including \Coq{} still rely heavily on conversion, particularly in their type checking/inference mechanisms.
Coherence should not be relaxed with Leibniz equality; otherwise, the type mismatch problems described in Sect.~\ref{sec:coherence} will occur.
With our coherence checking mechanism, users can still declare inconvertible multiple inheritance at their own risk and responsibility, because ambiguous paths messages are implemented as warnings rather than errors.
The \Lean{} system has an implicit coercion mechanism based on type class resolution, that allows users to define and use non-uniform implicit coercions; thus, coherence checking can be more difficult.
Actually, \Lean{} has no coherence checking mechanism; thus, users get more flexibility with this approach but need to be careful about being coherent.

There are three kinds of approaches to defining mathematical structures in dependent type theories: \emph{unbundled}, \emph{semi-bundled}, and \emph{bundled} approaches~\cite[Sect.~4.1.1]{mathlib:2020}. \pagebreak
The unbundled approach uses an interface that is parameterized by carriers and operators, and gathers axioms as its fields, e.g.,~\cite{Spitters:2011}; in contrast, the semi-bundled approach bundles operators together with axioms as in \coqinline{class_of} records, but still places carriers as parameters, e.g.,~\cite{mathlib:2020}.
The bundled approach uses an interface that bundles carriers together with operators and axioms, e.g., packed classes and telescopes~\cite[Sect.~2.3]{Mahboubi:2013}\cite{DeBruijn:1991,Pollack:2002,Geuvers:2002}.
The above difference between definitions of interfaces, in particular, whether carriers are bundled or not, leads to the use of different instance resolution and inference mechanisms: type classes~\cite{Sozeau:2008,Haftmann:2007} for the unbundled and semi-bundled approaches, and canonical structures or other unification hint mechanisms for the bundled approach.
Researchers have observed unpredictable behaviors~\cite{Hales:2018:blog} and efficiency issues~\cite[Sect.~4.3]{mathlib:2020}\cite[Sect.~11]{Spitters:2011} in inference with type classes; in contrast, structure inference with packed classes is predictable, and Theorem \ref{thm:join-AC} states this predictability more formally, except for concrete instance resolution.
The resolution of canonical structures is carried out by consulting a table of unification hints indexed by pairs of two head symbols and optionally with its recursive application and backtracking~\cite[Sect.~2.3]{Gonthier:2013b}.
The packed classes methodology is designed to use this recursive resolution not for structure inference~\cite[Sect.~2.3]{Garillot:2009} but only for parametric instances~\cite[Sect.~4]{Mahboubi:2013} such as lists and products, and not to use backtracking.
Thus, there is no efficiency issue in structure inference, except that nested class records and chains of their projections exponentially slow down the conversion which flat variant of packed classes~\cite[Sect.~4]{hierarchy-builder} can mitigate.
In the unbundled and semi-bundled approaches, a carrier may be associated with multiple classes; thus, inference of join and our work on structure inference (Sect.~\ref{sec:automated-structure-inference}, \ref{sec:formal-hierarchy}, and \ref{sec:validating-canonical-projections}) are problems specific to the bundled approach.\kp{Maybe say something about exponential blowups for unbundled as discussed here: \url{https://www.ralfj.de/blog/2019/05/15/typeclasses-exponential-blowup.html}}
A detailed comparison of type classes and packed classes has also been provided in~\cite{forgetful-inference}.
There are a few mechanisms to extend the unification engines of proof assistants other than canonical structures that can implement structure inference for packed classes: unification hints~\cite{Asperti:2009} and coercions pullback~\cite{Coen:2007}.
For any of those cases, our invariants are fundamental properties to implement packed classes and structure inference, but the invariant checking we propose has not been made yet at all.

Packed classes require the systematic use of records, implicit coercions, and canonical structures.
This leads us to automated generation of structures from their higher-level descriptions~\cite{hierarchy-builder}, which is work in progress.

\paragraph{Acknowledgements}

We appreciate the support from the STAMP (formerly \linebreak MARELLE) project-team, INRIA Sophia Antipolis.
In particular, we would like to thank Cyril Cohen and Enrico Tassi for their help in understanding packed classes and implementing our checking tools.
We are grateful to Reynald Affeldt, Yoichi Hirai, Yukiyoshi Kameyama, Enrico Tassi, and the anonymous reviewers for helpful comments on early drafts.
We are deeply grateful to Karl Palmskog for his careful proofreading and comments.
We would like to thank the organizers and participants of The Coq Workshop 2019 where we presented preliminary work of this paper.
In particular, an insightful question from Damien Pous was the starting point of our formal model of hierarchies presented in Sect.~\ref{sec:formal-hierarchy}.
This work was supported by JSPS Research Fellowships for Young Scientists and JSPS KAKENHI Grant Number 17J01683.

\renewcommand{\doi}[1]{\url{https://doi.org/#1}}
\bibliography{bibliography}


\newpage
\appendix

\section{\Coq{} references}
\label{sec:coq-references}

The following definitions express the basic properties of algebraic operations and constants used in Sect.~\ref{sec:packed-classes}.
\begin{coqcode}
Section ssrfun.
|>Local Set Implicit Arguments<|.
Variables S T R : Type.
\end{coqcode}
\begin{itemize}
 \item \coqinline{left_inverse e inv op <->}
       \coqinline{inv} is a left inverse of \coqinline{op} with respect to identity \coqinline{e}, i.e., \coqinline{op (inv x) x = e} for any \coqinline{x}.
 \item \coqinline{right_inverse e inv op <->}
       \coqinline{inv} is a right inverse of \coqinline{op} with respect to identity \coqinline{e}, i.e., \coqinline{op x (inv x) = e} for any \coqinline{x}.
\end{itemize}
\begin{coqcode}
Definition left_inverse e inv (op : S -> T -> R) := forall x, op (inv x) x = e.
Definition right_inverse e inv (op : S -> T -> R) := forall x, op x (inv x) = e.
\end{coqcode}
\begin{itemize}
 \item \coqinline{left_id e op <->}
       \coqinline{e} is a left identity for \coqinline{op}, i.e., \coqinline{op e x = x} for any \coqinline{x}.
 \item \coqinline{right_id e op <->}
       \coqinline{e} is a right identity for \coqinline{op}, i.e., \coqinline{op x e = x} for any \coqinline{x}.
\end{itemize}
\begin{coqcode}
Definition left_id e (op : S -> T -> T) := forall x, op e x = x.
Definition right_id e (op : S -> T -> S) := forall x, op x e = x.
\end{coqcode}
\begin{itemize}
 \item \coqinline{left_zero z op <->}
       \coqinline{z} is a left absorbing for \coqinline{op}, i.e., \coqinline{op z x = z} for any \coqinline{x}.
 \item \coqinline{right_zero z op <->}
       \coqinline{z} is a right absorbing for \coqinline{op}, i.e., \coqinline{op x z = z} for any \coqinline{x}.
\end{itemize}
\begin{coqcode}
Definition left_zero z (op : S -> T -> S) := forall x, op z x = z.
Definition right_zero z (op : S -> T -> T) := forall x, op x z = z.
\end{coqcode}
\begin{itemize}
 \item \coqinline{left_distributive op add <->}
       \coqinline{op} is left distributive over \coqinline{add},

       i.e., \coqinline{op (add x y) z = add (op x z) (op y z)} for any \coqinline{x}, \coqinline{y}, and \coqinline{z}.
 \item \coqinline{right_distributive op add <->}
       \coqinline{op} is right distributive over \coqinline{add},

       i.e., \coqinline{op x (add y z) = add (op x y) (op x z)} for any \coqinline{x}, \coqinline{y}, and \coqinline{z}.
\end{itemize}
\begin{coqcode}
Definition left_distributive (op : S -> T -> S) add :=
  forall x y z, op (add x y) z = add (op x z) (op y z).
Definition right_distributive (op : S -> T -> T) add :=
  forall x y z, op x (add y z) = add (op x y) (op x z).
\end{coqcode}
\begin{itemize}
 \item \coqinline{commutative op <->}
       \coqinline{op} is commutative,

       i.e., \coqinline{op x y = op y x} for any \coqinline{x} and \coqinline{y}.
 \item \coqinline{associative op <->}
       \coqinline{op} is associative,

       i.e., \coqinline{op x (op y z) = op (op x y) z} for any \coqinline{x}, \coqinline{y}, and \coqinline{z}.
\end{itemize}
\begin{coqcode}
Definition commutative (op : S -> S -> T) := forall x y, op x y = op y x.
Definition associative (op : S -> S -> S) :=
  forall x y z, op x (op y z) = op (op x y) z.

End ssrfun.
\end{coqcode}

\section{The invariant of packed classes concerning concrete instances and implicit coercions}
\label{sec:invariant-concrete-modularity}

In this appendix, we study the invariant of the packed classes methodology concerning concrete instances and implicit coercions, that ensures the modularity of reasoning with regard to concrete instances.
We also revisit an existing mechanism to construct concrete instances, packagers~\cite[Sect.~7]{Mahboubi:2013}. The packager mechanism is originally presented to make instance declarations easier, but also enforces this invariant to concrete instances.
We have not made any systematic checking mechanism for this invariant; thus, this appendix does not provide any original results. Nonetheless, practically we do not need to check this invariant thanks to the packager mechanism.

The packed classes methodology uses canonical structures for two purposes: structure inference (Sect.~\ref{sec:automated-structure-inference}) and canonical instances declaration for concrete types.
For example, one may use the \coqinline{Canonical} command to declare the canonical \coqinline{Monoid.type}, \coqinline{Semiring.type}, \coqinline{Group.type}, and \coqinline{Ring.type} instances for binary integers \coqinline{ZArith.BinInt.Z} of the \Coq{} standard library as follows.
\begin{coqcode}
Require Import ZArith.

Definition Z_monoid : Monoid.mixin_of Z :=
  Monoid.Mixin Z 0

Canonical Z_monoidType := Monoid.Pack Z (Monoid.Class Z Z_monoid).

Definition Z_semiring : Semiring.mixin_of Z_monoidType :=
  Semiring.Mixin Z_monoidType
                 1
                 Z.mul_add_distr_r Z.mul_add_distr_l Z.mul_0_l Z.mul_0_r.

Canonical Z_semiringType :=
  Semiring.Pack Z (Semiring.Class Z (Monoid.Class Z Z_monoid) Z_semiring).

Definition Z_group : Group.mixin_of Z_monoidType :=
  Group.Mixin Z_monoidType Z.opp Z.add_opp_diag_l Z.add_opp_diag_r.

Canonical Z_groupType :=
  Group.Pack Z (Group.Class Z (Monoid.Class Z Z_monoid) Z_group).

Canonical Z_ringType :=
  Ring.Pack Z (Ring.Class Z (Group.Class Z (Monoid.Class Z Z_monoid) Z_group)
                          Z_semiring).
\end{coqcode}
By declaring the first canonical instance \coqinline{Z_monoidType}, the unification algorithm of \Coq{} can solve a unification problem $\text{\coqinline{Monoid.sort}} ~ \evar{M} \unify \text{\coqinline{Z}}$ by instantiating $\evar{M} : \text{\coqinline{Monoid.type}}$ with \coqinline{Z_monoidType}.
Therefore, the type inference algorithm of \Coq{} can instantiate the unification variable $\evar{M} : \text{\coqinline{Monoid.type}}$ in a \Gallina{} term \coqinline{@add $\evar{M}$ 0
Other \coqinline{Canonical} declarations respectively work similarly as unification hints to infer canonical semiring, group, and ring instances for \coqinline{Z}.

The initial motivation to introduce packagers in~\cite[Sect.~7]{Mahboubi:2013} was not to repeat the same construction of class instances in canonical instance declarations.
For instance, all the above canonical instance declarations have subterm \coqinline{Monoid.Class Z Z_monoid} inside; moreover, \coqinline{Z_ringType} is consisting of the monoid, semiring, and group mixins of \coqinline{Z}, and thus can be automatically constructed from canonical semiring and group instances in principle.
These redundancies exist because the \coqinline{class_of} record must be the set of all mixin of superclasses (Sect.~\ref{sec:packed-classes}); in other words, inheritance in packed classes should always be by inclusion rather than construction/theorem.
Thanks to these redundancies, canonical concrete instances can always be defined to respect the following proposition, which is the invariant concerning implicit coercions and canonical concrete instances.

\begin{proposition}
 \label{invariant:concrete-definitional-equality}
 For any structures \coqinline{A} and \coqinline{B}, and a concrete type \coqinline{T : Type}, such that \coqinline{B} inherits from \coqinline{A} with an implicit coercion $\text{\coqinline{B.aType}} : \text{\coqinline{B.type}} \rightarrowtail \text{\coqinline{A.type}}$ and \coqinline{T} has the canonical instances \coqinline{T_aType} (\coqinline{:= A.Pack T ...}) and \coqinline{T_bType} (\coqinline{:= B.Pack T ...}), the following definitional equation holds:
 \[
  \text{\coqinline{B.aType T_bType}} \conv \text{\coqinline{T_aType}}.
 \]
\end{proposition}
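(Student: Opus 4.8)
The plan is to derive the equation purely by definitional reduction, after making explicit the common shape of the coercion and of the two canonical instances. First I would write down the general form of the subtyping coercion: following the pattern of \coqinline{Semiring.monoidType} and \coqinline{Ring.monoidType} from Sect.~\ref{sec:packed-classes}, the coercion \coqinline{B.aType} is definitionally
\[
 \text{\coqinline{B.aType}} \conv \lambda cT,~\text{\coqinline{A.Pack}}~(\text{\coqinline{B.sort}}~cT)~(\varphi~(\text{\coqinline{B.class}}~cT)),
\]
where $\varphi$ is the iterated \coqinline{base} projection that extracts an \coqinline{A}-class out of a \coqinline{B}-class. When \coqinline{A} sits immediately above \coqinline{B}, $\varphi$ is just \coqinline{B.base _}; when it is several levels up, $\varphi$ is a composite such as $\text{\coqinline{Group.base _}} \circ \text{\coqinline{Ring.base _}}$ for \coqinline{Ring.monoidType}.

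Next I would record the shape of the instances: \coqinline{T_aType} $\conv$ \coqinline{A.Pack T c_A} and \coqinline{T_bType} $\conv$ \coqinline{B.Pack T c_B} for some \coqinline{c_A : A.class_of T} and \coqinline{c_B : B.class_of T}. The key structural fact — the \emph{inclusion} property of packed classes discussed just above the proposition — is that the \coqinline{A}-class stored in \coqinline{T_aType} is \emph{literally} the \coqinline{A}-part of the \coqinline{B}-class stored in \coqinline{T_bType}; that is, \coqinline{c_A} $\conv \varphi~\text{\coqinline{c_B}}$. In the integers example this holds syntactically: the subterm \coqinline{Monoid.Class Z Z_monoid} occurs verbatim inside every instance from \coqinline{Z_monoidType} up to \coqinline{Z_ringType}.

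With these two observations the computation is a routine chain of $\delta$- and $\iota$-reductions:
\begin{align*}
 \text{\coqinline{B.aType T_bType}}
 &\conv \text{\coqinline{A.Pack}}~(\text{\coqinline{B.sort}}~(\text{\coqinline{B.Pack T c_B}}))~(\varphi~(\text{\coqinline{B.class}}~(\text{\coqinline{B.Pack T c_B}}))) \\
 &\conv \text{\coqinline{A.Pack T}}~(\varphi~\text{\coqinline{c_B}})
  \conv \text{\coqinline{A.Pack T c_A}}
  \conv \text{\coqinline{T_aType}}.
\end{align*}
The first step unfolds the coercion ($\delta$), the second fires the projections \coqinline{B.sort} and \coqinline{B.class} against the constructor \coqinline{B.Pack} ($\iota$), the third is the inclusion property, and the last folds \coqinline{T_aType} back up ($\delta$).

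The main obstacle is not the reduction, which is mechanical, but justifying the inclusion property \coqinline{c_A} $\conv \varphi~\text{\coqinline{c_B}}$ in general. This is precisely the invariant the proposition asserts, and it is not automatic: a user could declare \coqinline{T_aType} with an \coqinline{A}-class unrelated to the one buried inside \coqinline{T_bType}, breaking convertibility. The reason the statement is phrased for canonically constructed instances is that declaring instances \emph{by inclusion} — reusing the very same mixins when building \coqinline{c_B} and \coqinline{c_A}, as enforced by the packager mechanism of Appendix~\ref{sec:invariant-concrete-modularity} — makes $\varphi~\text{\coqinline{c_B}}$ and \coqinline{c_A} the same term syntactically, hence definitionally, so that the equation holds by conversion with no residual proof obligation.
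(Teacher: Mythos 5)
Your proposal is correct and matches the paper's reasoning: the paper in fact gives no formal proof of this proposition at all, only the surrounding informal justification that \texttt{class\_of} records are deliberately redundant so that instances can be (and, via the packager mechanism, are) built by inclusion, together with a \texttt{unify}-based confirmation on the \texttt{Z} example. Your explicit $\delta$/$\iota$-reduction chain and, especially, your identification of the inclusion property $\texttt{c\_A} \conv \varphi\,\texttt{c\_B}$ as the real content --- a design invariant enforced by construction rather than a fact provable unconditionally --- is exactly the paper's argument made precise.
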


By substituting our example of Sect.~\ref{sec:packed-classes} and their instances for \coqinline{Z} into Proposition \ref{invariant:concrete-definitional-equality}, the following definitional equations can be obtained:
\begin{align}
 \text{\coqinline{Semiring.monoidType Z_semiringType}}
   &\conv \text{\coqinline{Z_monoidType}}, \label{eq:Z-semiring-monoid} \\
 \text{\coqinline{Group.monoidType Z_groupType}}
   &\conv \text{\coqinline{Z_monoidType}}, \label{eq:Z-group-monoid} \\
 \text{\coqinline{Ring.monoidType Z_ringType}}
   &\conv \text{\coqinline{Z_monoidType}}, \label{eq:Z-ring-monoid} \\
 \text{\coqinline{Ring.semiringType Z_ringType}}
   &\conv \text{\coqinline{Z_semiringType}}, \label{eq:Z-ring-semiring} \\
 \text{\coqinline{Ring.groupType Z_ringType}}
   &\conv \text{\coqinline{Z_groupType}}. \label{eq:Z-ring-group}
\end{align}
If Proposition \ref{invariant:concrete-definitional-equality} does not hold, reasonings regarding concrete instances using overloaded operators and lemmas cannot be modular.
For example, we cannot reuse definitions and theorems on \coqinline{Z_monoidType} to reason about \coqinline{Z_semiringType} if equation (\ref{eq:Z-semiring-monoid}) does not hold.
We discuss about this problem in detail in~\cite[Sect.~3]{forgetful-inference}; also, the same issue and solution has recently been observed in~\cite[Sect.~3]{Buzzard:2020} which uses a bundled type classes based approach.

The packager mechanism provides a way to build a concrete instance of a structure from a concrete type and a mixin of that structure, but mixins of its superclasses are not required.
Since the monoid structure has no superclasses, the monoid packager can be defined straightforwardly as the following notation.
\begin{coqcode}
Notation MonoidType T m := (Monoid.Pack T (Monoid.Class T m)).
\end{coqcode}
The canonical monoid instance for \coqinline{Z} can be redefined by using this packager as follows.
\begin{coqcode}
Canonical Z_monoidType := MonoidType Z Z_monoid.
\end{coqcode}

The semiring packager takes a type and its semiring mixin, fetches the canonical monoid instance for the given type by using a phantom type, and build a semiring instance from them.
This packager can be defined as follows by using the infrastructure provided in~\cite[Sect.~7]{Mahboubi:2013}, except that binders in its \coqinline{[find ...]} notations are extended with explicit type annotation.
\begin{coqcode}[literate=*{=?=}{{$\thicksim$}}2]
Definition pack_semiring (T : Type) :=
  [find bT : Monoid.type      | T =?= Monoid.sort bT | "is not an Monoid.type" ]
  [find b : Monoid.class_of T | b =?= Monoid.class bT ]
  fun m : Semiring.mixin_of (Monoid.Pack T b) =>
  Semiring.Pack T (Semiring.Class T b m).

Notation SemiringType T m := (pack_semiring T _ id_phant _ id_phant m).
\end{coqcode}
This semiring packager notation \coqinline{SemiringType T m} can be read as: ``find \coqinline{bT} of type \coqinline{Monoid.type} such that \coqinline{T} unifies with \coqinline{Monoid.sort bT}, find \coqinline{b} of type \coqinline{Monoid.class_of T} such that \coqinline{b} unifies with \coqinline{Monoid.class bT}, and build a semiring instance \coqinline{Semiring.Pack T (Semiring.Class T b m)}''.
The first occurrence of the \coqinline{[find ...]} notation in \coqinline{pack_semiring} hides two binders
\begin{coqcode}
fun (bT : Monoid.type) (_ : phantom T -> phantom (Monoid.sort bT)) => ...
\end{coqcode}
where \coqinline{phantom} is defined as follows:
\begin{coqcode}
Variant phantom {T : Type} (p : T) : Prop := Phantom : phantom p.
\end{coqcode}
The \coqinline{SemiringType} notation instantiates the above binders with two arguments \coqinline{_ id_phant} where \coqinline{id_phant} is an identity function of type \coqinline{phantom t -> phantom t} (for some \coqinline{t}), so that \coqinline{T} unifies with \coqinline{Monoid.sort bT} to instantiates \coqinline{bT} with the canonical monoid instance for \coqinline{T}.
The second occurrence of the \coqinline{[find ...]} notation coerces \coqinline{Monoid.class bT} of type \coqinline{Monoid.class_of |*bT*|} to type \coqinline{Monoid.class_of |*T*|} by unification.

The canonical semiring instance for \coqinline{Z} can be redefined by using the semiring packager as follows.
\begin{coqcode}
Canonical Z_semiringType := SemiringType Z Z_semiring.
\end{coqcode}
Notably, if we declare the monoid instance \coqinline{Z_monoidType} as \coqinline{Definition} instead of \coqinline{Canonical}, type-checking the above declaration \coqinline{Z_semiringType} fails with the message
\begin{coqcode}
`Error: Z "is not an Monoid.type"
\end{coqcode}
so that the user can find a fact that the canonical monoid instance for \coqinline{Z} has not been declared yet.

As a side note, if one omits the type argument of a packager with a placeholder as follows, the packager may behave unpredictably as a unification hint as we mentioned in Sect.~\ref{sec:evaluation}.
\begin{coqcode}
Canonical bad_Z_semiringType := SemiringType _ Z_semiring.
\end{coqcode}
The placeholder in the above diclaration is instantiated with \coqinline{Monoid.type Z_monoidType} by the first occurrence of \coqinline{[find ...]} notation in \coqinline{pack_semiring}; thus, a unification hint that overwrites the join of semiring and monoids with the concrete type \coqinline{Z} will be synthesized, and the assertion for that join
\begin{coqcode}
check_join Semiring.type Monoid.type Semiring.type.
\end{coqcode}
fails with the following message:
\begin{code}
The join of Semiring.type and Monoid.type is a concrete type Z but is expected
to be Semiring.type.
\end{code}

The group packager and the canonical group instance for \coqinline{Z} can similarly be (re)defined as follows.
\begin{coqcode}[literate=*{=?=}{{$\thicksim$}}2]
Definition pack_group (T : Type) :=
  [find bT : Monoid.type      | T =?= Monoid.sort bT | "is not an Monoid.type" ]
  [find b : Monoid.class_of T | b =?= Monoid.class bT ]
  fun m : Group.mixin_of (Monoid.Pack T b) =>
  Group.Pack T (Group.Class T b m).

Notation GroupType T m := (pack_group T _ id_phant _ id_phant m).

Canonical Z_groupType := GroupType Z Z_group.
\end{coqcode}

Since the ring structure inherits from groups and semirings and has no further axiom, the ring packager can be defined as the following notation that only takes a type; internally, it finds the canonical group and semiring instances for the given type and construct a ring instance from them.
\begin{coqcode}[literate=*{=?=}{{$\thicksim$}}2]
Definition pack_ring (T : Type) :=
  [find gT : Group.type      | T =?= Group.sort gT | "is not an Group.type" ]
  [find g : Group.class_of T | g =?= Group.class gT ]
  [find sT : Semiring.type   | T =?= Semiring.sort sT
                             | "is not an Semiring.type" ]
  [find s : Semiring.mixin_of (Monoid.Pack T (Group.base T g))
                             | s =?= Semiring.mixin _ (Semiring.class sT) ]
  Ring.Pack T (Ring.Class T g s).

Notation RingType T :=
  (pack_ring T _ id_phant _ id_phant _ id_phant _ id_phant).

Canonical Z_ringType := RingType Z.
\end{coqcode}

In the above definitions, each packager notation of a structure takes canonical instances of its (direct) superclasses for a given type, and constructs an instance of the structure using mixins and classes included in that canonical instances; thus, they enforce Proposition \ref{invariant:concrete-definitional-equality} on canonical instances, which can be confirmed by the following \Ltac{} script.
\begin{coqcode}
unify (Semiring.monoidType Z_semiringType) Z_monoidType.
unify (Group.monoidType Z_groupType)       Z_monoidType.
unify (Ring.monoidType Z_ringType)         Z_monoidType.
unify (Ring.semiringType Z_ringType)       Z_semiringType.
unify (Ring.groupType Z_ringType)          Z_groupType.
\end{coqcode}

In general, if packagers are defined and used properly, Proposition \ref{invariant:concrete-definitional-equality} should hold.
Users can still bypass the packager mechanism and construct canonical instances that do not respect the invariant, but practically packagers work to enforce the invariant.

\section{Omitted lemmas and proofs for Sect.~\ref{sec:formal-hierarchy}}
\label{sec:formal-hierarchy-proofs}

This appendix shows informal statements and proofs of lemmas that prove Theorem \ref{thm:join-AC} and the complete formalization of Sect.~\ref{sec:formal-hierarchy} in \Coq. For each part, informal ones come before formal ones.

\begin{coqcode}
From mathcomp Require Import ssreflect ssrfun ssrbool eqtype ssrnat seq choice.
From mathcomp Require Import fintype order finset.

|>Set Implicit Arguments.<|
Unset Strict Implicit.
Unset Printing Implicit Defensive.

Import Order.Syntax Order.LTheory.
\end{coqcode}

\begin{lemma}
 \label{lem:finltgt-wf}
 The strict inheritance relation $\leadsto^+$ and its inverse relation $\leadsfrom^+$ are well-founded relations.
\end{lemma}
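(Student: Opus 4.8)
The plan is to reduce the statement to the single fact that every strict partial order on a finite type is well-founded; finiteness is what makes the argument work, since on a finite set there can be no infinite strictly descending chain. Concretely, I would exhibit a natural-number measure that strictly decreases along the relation and then invoke the standard principle (\texttt{well\_founded\_lt\_compat}, or a short reproof of it by well-founded induction on $\mathbb{N}$) that a relation admitting such a measure is well-founded.

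For $\leadsto^+$, I would take the measure $m(B) := \#|\{A \in \mathcal{H} \mid A \leadsto^+ B\}|$, the cardinality of the (finite, since $\mathcal{H}$ is finite) set of strict superclasses of $B$. The key step is to show that $A \leadsto^+ B$ implies $m(A) < m(B)$. By transitivity of $\leadsto^+$, every strict superclass of $A$ is a strict superclass of $B$, so $\{C \mid C \leadsto^+ A\} \subseteq \{C \mid C \leadsto^+ B\}$; by irreflexivity together with the hypothesis $A \leadsto^+ B$, the element $A$ lies in the right-hand set but not the left-hand one, making the inclusion proper. Then \texttt{proper\_card} (a proper subset of a finite type has strictly smaller cardinality) yields $m(A) < m(B)$, and the well-foundedness of $\leadsto^+$ follows from the measure principle above.

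For the inverse relation $\leadsfrom^+$, I would observe that it is again a strict partial order on the same finite carrier (the converse of an irreflexive and transitive relation is itself irreflexive and transitive), so the identical argument applies with the dual measure $m'(A) := \#|\{B \in \mathcal{H} \mid A \leadsto^+ B\}|$ counting strict subclasses instead of superclasses. Equivalently, I would factor out a single lemma stating that any strict order on a \texttt{finPOrderType} is well-founded and apply it both to $\leadsto^+$ and to its converse.

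I expect the mathematical content to be routine and the real work to lie in the Coq bookkeeping: turning the informal predecessor set into a \texttt{finset} whose cardinality is available, discharging the proper-inclusion and \texttt{proper\_card} step cleanly, and---most error-prone---keeping the orientation of Coq's accessibility predicate (\texttt{Acc}/\texttt{well\_founded}) aligned with the intended direction of inheritance, while bridging between the boolean order of \texttt{finPOrderType} and the \texttt{Prop}-valued \texttt{well\_founded} from the standard library.
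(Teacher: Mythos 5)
Your proposal is correct and matches the paper's argument: the paper's \Coq{} proof also establishes well-foundedness by a cardinality measure on the set of strict predecessors (inducting on a natural-number bound of \texttt{\#|<~x|}), discharges the strict decrease via a proper-inclusion argument using transitivity and irreflexivity together with \texttt{proper\_card}, and obtains the result for the inverse relation by instantiating the same lemma at the converse \texttt{finPOrderType}. No substantive differences to report.
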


\begin{proof}
 Strict partial orders on a finite set are well-founded. \qedhere
\end{proof}

\begin{coqcode}
Lemma finlt_wf disp (T : finPOrderType disp) : well_founded (<
Proof.
move=> x.
have: #|< x|
elim: #|T| x => [|n ihn] x hx; constructor => y hy.
- have: y \in enum (< x)
  by move: hx; rewrite cardE /=; case: (enum _).
- apply: ihn; rewrite -ltnS; apply/(leq_trans _ hx)/proper_card.
  apply/properP; split.
  + by apply/subsetP => ?; move/lt_trans; apply.
  + by exists y => //; rewrite inE ltxx.
Qed.

Lemma fingt_wf disp (T : finPOrderType disp) : well_founded (>
Proof. exact: (@finlt_wf _ [finPOrderType of Order.converse T]). Qed.
\end{coqcode}

Let us assume \coqinline{S} is the type of structures whose ordering is the inheritance relation: \coqinline{a <= b} and \coqinline{a < b} mean \coqinline{b} non-strictly and strictly inherits from \coqinline{a} respectively.

\begin{coqcode}
Section join_structures_meta_properties.
Variables (disp : unit) (S : finPOrderType disp).
\end{coqcode}

\subsubsection*{Definition \ref{def:common-subclasses}} is as follows.

\begin{coqcode}
(* Weak and strict subclasses.                                                *)
Definition w_subclasses (b : S) : {set S} := [set a | b <= a]
Definition s_subclasses (b : S) : {set S} := [set a | b < a]

(* Minimal common subclasses.                                                 *)
Definition mcs (a b : S) : {set S} :=
  let subs := w_subclasses a :&: w_subclasses b in
  subs :\: \bigcup_(c in subs) s_subclasses c.
\end{coqcode}

\subsubsection*{Definition \ref{def:well-formed-hierarchy}} is as follows.

\begin{coqcode}
(* For any structure `a` and `b`, their join must be unique.                  *)
Hypothesis (unique_join : forall a b : S, #|mcs a b| <= 1).
\end{coqcode}

\subsubsection*{Definition \ref{def:extended-hierarchy}} is as follows.

\begin{coqcode}
(* The type of structures extended with the top (undefined) structure.        *)
Definition extS := option S.

(* Ordering of `S` can be naturally extended to `extS` as follows.            *)
Definition le_extS (a b : extS) : bool :=
  match b, a with
  | None, _ => true
  | Some _, None => false
  | Some b, Some a => a <= b
  end

Definition lt_extS (a b : extS) : bool :=
  match a, b with
  | None, None => false
  | Some _, None => true
  | None, Some _ => false
  | Some a, Some b => a < b
  end
\end{coqcode}

\begin{lemma}
 An extended hierarchy $\bar{\mathcal{H}}$ is (non-strictly) partially ordered by its inheritance relation $\bar{\leadsto}^*$.
\end{lemma}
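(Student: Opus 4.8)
The goal is to show that the extended inheritance relation \coqinline{le_extS} turns the extended carrier $\bar{\mathcal{H}}$ (encoded as \coqinline{option S}, with \coqinline{None} playing the role of $\top$) into a partial order. Mathematically this means establishing reflexivity, antisymmetry, and transitivity of \coqinline{le_extS}; to actually register the result as a MathComp \coqinline{porderType} one must additionally discharge the boolean obligation relating \coqinline{lt_extS} to \coqinline{le_extS}. The plan is to lift every obligation from the underlying \coqinline{finPOrderType} $S$ by exhaustive case analysis on whether each element is $\top$ (i.e.\ \coqinline{None}) or an ordinary structure (i.e.\ \coqinline{Some x}). Because \coqinline{extS} is literally \coqinline{option S}, each case reduces definitionally either to a boolean constant or to the corresponding fact about $\leq$ on $S$.

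Concretely, first I would prove reflexivity: on \coqinline{None} the relation computes to \coqinline{true} by definition, and on \coqinline{Some x} it reduces to \coqinline{x <= x}, which is closed by \coqinline{lexx} on $S$. For antisymmetry I would destruct both arguments: the \coqinline{None}/\coqinline{None} case is immediate, the \coqinline{Some x}/\coqinline{Some y} case produces \coqinline{x <= y} together with \coqinline{y <= x} and is closed by \coqinline{le_anti} on $S$, and each mixed case is vacuous because one of the two hypotheses computes to \coqinline{false} ($\top$ is never below an ordinary element). Transitivity follows the same pattern over three arguments: whenever the target is \coqinline{None} the conclusion is \coqinline{true}; whenever the middle element is \coqinline{None} while the target is an ordinary element one of the premises is \coqinline{false}; and the all-\coqinline{Some} case is discharged by \coqinline{le_trans} on $S$.

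The last obligation relates \coqinline{lt_extS} and \coqinline{le_extS}. Here I would again split on both arguments, using the corresponding strict-versus-nonstrict fact of $S$ in the \coqinline{Some}/\coqinline{Some} case and simply checking that both sides compute to the same boolean in the boundary cases (for instance, \coqinline{Some x} against \coqinline{None} makes \coqinline{lt_extS} compute to \coqinline{true}, matching the right-hand side). I expect no conceptual difficulty, since the order on $\bar{\mathcal{H}}$ is structurally the order on $S$ with a freely adjoined top element. The only point requiring care is bookkeeping: \coqinline{le_extS} pattern-matches on the pair \coqinline{(b, a)} rather than \coqinline{(a, b)}, so I must unfold it in the correct orientation and keep track of which projection is the larger element, so that a case split does not silently yield the wrong premise.
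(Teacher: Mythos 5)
Your proposal is correct and matches the paper's proof, which likewise establishes reflexivity, antisymmetry, and transitivity (plus the \coqinline{lt_extS_def} obligation) by exhaustive case analysis on \coqinline{None}/\coqinline{Some}, reducing each non-trivial case to \coqinline{le_anti}, \coqinline{le_trans}, and \coqinline{lt_def} on the underlying \coqinline{finPOrderType}. The only nit is that your transitivity case enumeration omits the case where the first element is $\top$ and the other two are ordinary, but that case is vacuous for the same reason as the ones you list.
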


\begin{proof}
 The reflexivity, antisymmetry and transitivity of $\bar{\leadsto}^*$ should be checked here.
 The proofs are by case analysis. \qedhere
\end{proof}

\begin{coqcode}
Lemma lt_extS_def a b : lt_extS a b = (b != a) && le_extS a b.
Proof. by case: a b => [a|][b|] //=; rewrite lt_def. Qed.

Lemma le_extS_refl : reflexive le_extS. Proof. by case=> /=. Qed.

Lemma le_extS_anti : antisymmetric le_extS.
Proof. by case=> [a|][b|] // /le_anti ->. Qed.

Lemma le_extS_trans : transitive le_extS.
Proof. by case=> [?|][?|][?|] //; exact: le_trans. Qed.

Definition extS_porderMixin :=
  LePOrderMixin lt_extS_def le_extS_refl le_extS_anti le_extS_trans.
Canonical extS_porderType := POrderType disp extS extS_porderMixin.

Lemma le_extS_SomeE a b : (Some a <= Some b :> extS)
Proof. by []. Qed.

Lemma lt_extS_SomeE a b : (Some a < Some b :> extS)
Proof. by []. Qed.
\end{coqcode}

\subsubsection*{Definition \ref{def:unique-join}} is as follows.

\begin{coqcode}
Definition join (a b : extS) : extS :=
  match a, b with
    | Some a, Some b =>
      if enum (mcs a b) is c :: _ then Some c else None
    | _, _ => None
  end.
\end{coqcode}

\begin{lemma}
 \label{lem:joinN}
 For an extended hierarchy, $\top$ is the absorbing element.
\end{lemma}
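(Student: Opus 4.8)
The plan is to establish the two absorbing laws $\mathrm{join}(A, \top) = \top$ and $\mathrm{join}(\top, A) = \top$ for every $A \in \bar{\mathcal{H}}$. Under the encoding of Definition~\ref{def:extended-hierarchy}, where $\top$ is represented by \coqinline{None} and a genuine structure $a \in \mathcal{H}$ by \coqinline{Some a}, these laws amount to showing that \coqinline{join} returns \coqinline{None} as soon as one of its two arguments is \coqinline{None}.

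First I would unfold the definition of \coqinline{join} from Definition~\ref{def:unique-join} and perform a case analysis on its two arguments. Inspecting that definition, the only branch that can yield a \coqinline{Some} result is the one in which both arguments are built with \coqinline{Some}; every other combination of the arguments falls through to the catch-all clause returning \coqinline{None}. Consequently, fixing either argument to \coqinline{None} drives the match into this catch-all clause, and the goal closes by reduction, i.e.\ by reflexivity after computation. Note that the well-formedness hypothesis \coqinline{unique_join} plays no role here: the statement is purely structural and holds definitionally.

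There is no real obstacle in this lemma; it is one of the elementary facts about \coqinline{join} that are later assembled into the join-semilattice structure of Theorem~\ref{thm:join-AC}. The only point deserving a moment of attention is that ``absorbing element'' is a two-sided property, so both the right-absorption statement \coqinline{join a None} and the left-absorption statement \coqinline{join None b} must be shown equal to \coqinline{None}; both are discharged uniformly by the same case split on the arguments.
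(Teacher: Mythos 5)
Your proposal is correct and matches the paper's proof, which is simply ``by definition'': unfolding \coqinline{join} and case-splitting on the arguments shows every branch with a \coqinline{None} argument falls into the catch-all clause returning \coqinline{None}, exactly as the paper's \coqinline{joinNx}/\coqinline{joinxN} are discharged by \coqinline{done} and \coqinline{by case}. You also correctly note that the well-formedness hypothesis is not needed here.
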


\begin{proof}
 By definition. \qedhere
\end{proof}

\begin{coqcode}
Lemma joinNx : left_zero None join. Proof. done. Qed.
Lemma joinxN : right_zero None join. Proof. by case. Qed.
\end{coqcode}

\begin{lemma}
 \label{lem:joinxx}
 For an extended hierarchy, the join is idempotent:
 \[
  \forall A \in \bar{\mathcal{H}}, \mathrm{join}(A, A) = A.
 \]
\end{lemma}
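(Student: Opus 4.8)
The plan is to case-split on whether $A$ is the added top element $\top$ or an element of the original hierarchy $\mathcal{H}$. If $A = \top$, then $\mathrm{join}(\top, \top) = \top$ is immediate: since $\top \notin \mathcal{H}$, Definition \ref{def:unique-join} falls into its second (``otherwise'') case, and this also follows directly from Lemma \ref{lem:joinN} (in the formalization, \coqinline{joinNx} or \coqinline{joinxN} applied to \coqinline{None}). So the only real content is the case $A \in \mathcal{H}$, where by Definition \ref{def:unique-join} it suffices to prove $\mathrm{mcs}(A, A) = \{A\}$.

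To establish $\mathrm{mcs}(A, A) = \{A\}$, I would first observe that the common subclasses of $A$ with itself are $\mathcal{C} = \{C \in \mathcal{H} \mid A \leadsto^* C\}$, since the conjunction $A \leadsto^* C \land A \leadsto^* C$ in Definition \ref{def:common-subclasses} collapses (this is \coqinline{w_subclasses a} after simplifying the self-intersection). Then I would show $A \in \mathrm{mcs}(A, A)$ in two steps: $A \in \mathcal{C}$ by reflexivity of $\leadsto^*$ (Definition \ref{def:hierarchy}); and $A$ is minimal because no $C' \in \mathcal{C}$ satisfies $C' \leadsto^+ A$. Indeed, $C' \in \mathcal{C}$ gives $A \leadsto^* C'$, so $C' \leadsto^+ A$ together with antisymmetry of $\leadsto^*$ would force $A = C'$ and hence $A \leadsto^+ A$, contradicting irreflexivity of $\leadsto^+$.

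Having shown $A \in \mathrm{mcs}(A, A)$, I would invoke the well-formedness hypothesis (Definition \ref{def:well-formed-hierarchy}, the \coqinline{unique_join} assumption $|\mathrm{mcs}(A, B)| \leq 1$) to conclude $\mathrm{mcs}(A, A) = \{A\}$: a set of cardinality at most one that contains $A$ must equal the singleton $\{A\}$. Feeding this into Definition \ref{def:unique-join} yields $\mathrm{join}(A, A) = A$. In the \Coq{} development the last bridge is the only mildly fiddly point, because \coqinline{join} is defined through \coqinline{enum (mcs a a)} rather than directly on the set: I would rewrite \coqinline{mcs a a} to \coqinline{[set a]} and use that its enumeration is the singleton list \coqinline{[:: a]} to compute the head and match the \coqinline{Some a} branch. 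The main obstacle is thus not any deep argument but the minimality verification, which is the single place where the order axioms (antisymmetry and irreflexivity) are genuinely required; everything else is bookkeeping over the definitions and the cardinality bound.
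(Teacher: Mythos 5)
Your proof is correct, but it takes a different route from the paper's at the key step. Both reduce to showing $\mathrm{mcs}(A,A) = \{A\}$ for $A \in \mathcal{H}$ (the $\top$ case being immediate), and both use reflexivity, antisymmetry, and transitivity in essentially the same places. The difference is how the singleton is obtained: you prove only the membership $A \in \mathrm{mcs}(A,A)$ and then invoke the well-formedness hypothesis $|\mathrm{mcs}(A,B)| \leq 1$ to conclude that the set is exactly $\{A\}$, whereas the paper proves the extensional equality directly, showing for an arbitrary $B$ that $B \in \mathrm{mcs}(A,A)$ is equivalent to $B = A$ (if $A \leadsto^* B$ and $B \neq A$ then $A$ itself witnesses non-minimality of $B$, since $A \leadsto^+ B$). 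The paper's route buys a genuinely stronger result: idempotence holds for \emph{any} extended hierarchy, which is how the lemma is stated — note that the paper deliberately reserves the qualifier ``well-formed'' for Lemma \ref{lem:le-joinE} and Lemma \ref{lem:joinA}, the only places where the \texttt{unique\_join} hypothesis is actually used, and the formal proof of \texttt{joinxx} indeed never touches it. Your argument, as written, only establishes the lemma under the extra well-formedness assumption; since that hypothesis is in scope in the enclosing \Coq{} section and Theorem \ref{thm:join-AC} only concerns well-formed hierarchies, nothing downstream breaks, but to prove the statement exactly as phrased you should redirect your minimality argument at an arbitrary element of $\mathrm{mcs}(A,A)$ rather than at $A$ — the same antisymmetry/irreflexivity reasoning then gives the inclusion $\mathrm{mcs}(A,A) \subseteq \{A\}$ with no appeal to the cardinality bound. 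Your handling of the final bookkeeping step (rewriting $\mathrm{mcs}(A,A)$ to the singleton and computing the head of its enumeration) matches the formal proof.
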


\begin{proof}
 If $A = \top$, $\mathrm{join}(A, A) = A$ is true by Definition \ref{def:unique-join}; otherwise, it is sufficient to show $\mathrm{mcs}(A, A) = \{A\}$.
 Let us prove they are extensionally equal.
 \begin{align*}
  & B \in \mathrm{mcs}(A, A) \\
  \Leftrightarrow{}
  & A \leadsto^* B \land \neg (\exists A' \in \mathcal{H}, A \leadsto^* A' \land A' \leadsto^+ B)
  & \text{(Definition \ref{def:common-subclasses})}
  \intertext{
  If $A \leadsto^* B$ does not hold, the above formula is false; also, the RHS $B \in \{A\}$ ($\Leftrightarrow A \leadsto^* B \land B \leadsto^* A$) is false. Thus, let us consider the case $A \leadsto^* B$ holds.
  }
  \Leftrightarrow{}
  & \neg (\exists A' \in \mathcal{H}, A \leadsto^* A' \land A' \leadsto^+ B) \\
  \Leftrightarrow{}
  & \neg (A \leadsto^+ B) & \text{(Transitivity)} \\
  \Leftrightarrow{}
  & \neg (A \leadsto^* B \land A \neq B) \\
  \Leftrightarrow{}
  & \neg (A \neq B) & \text{(Assumption)} \\
  \Leftrightarrow{}
  & B \in \{A\}
 \end{align*}
 The above transformation shows $\mathrm{mcs}(A, A) = \{A\}$; thus, the join is idempotent. \qedhere
\end{proof}

\begin{coqcode}
Lemma joinxx : idempotent join.
Proof.
case=> [a|] //=; rewrite /mcs setIid; set A := _ :\: _.
suff ->: A = [set a] by rewrite enum_set1.
rewrite {}/A; apply/setP => b.
rewrite !inE eq_le andbC [RHS]andbC; case: (boolP (a <= b)
apply/bigcupP; case: ifP => [Hba [c]| Hba].
- by rewrite !inE => Hac; move/(le_lt_trans (le_trans Hba Hac)); rewrite ltxx.
- by exists a; rewrite !inE //; rewrite lt_def Hab andbT eq_le Hba.
Qed.
\end{coqcode}

\begin{lemma}
 \label{lem:joinC}
 For an extended hierarchy, the join is commutative:
 \[
  \forall A \, B \in \bar{\mathcal{H}}, \mathrm{join}(A, B) = \mathrm{join}(B, A).
 \]
\end{lemma}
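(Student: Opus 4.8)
The plan is to reduce the commutativity of $\mathrm{join}$ to the symmetry of the minimal-common-subclasses operator $\mathrm{mcs}$, exploiting that $A$ and $B$ enter its definition only through a symmetric combination. First I would perform a case analysis on whether $A$ or $B$ equals $\top$. If either argument is $\top$, then both $\mathrm{join}(A,B)$ and $\mathrm{join}(B,A)$ collapse to $\top$ by the absorbing property established in Lemma~\ref{lem:joinN} (combining the left and right zero laws, i.e.\ $\mathrm{join}(\top,X)=\top$ and $\mathrm{join}(X,\top)=\top$), so the two sides agree immediately.

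The remaining case is $A,B \in \mathcal{H}$. Here, by Definition~\ref{def:unique-join}, the value $\mathrm{join}(A,B)$ is determined entirely by the set $\mathrm{mcs}(A,B)$: it returns the unique element when this set is a singleton and $\top$ otherwise, and in the formalization it selects the head of $\mathrm{enum}(\mathrm{mcs}(A,B))$. It therefore suffices to prove the set equality $\mathrm{mcs}(A,B) = \mathrm{mcs}(B,A)$, since equal sets yield identical enumerations and hence equal joins.

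To establish $\mathrm{mcs}(A,B) = \mathrm{mcs}(B,A)$, I would observe that the set of common subclasses $\mathcal{C} = \{C \mid A \leadsto^* C \land B \leadsto^* C\}$ is symmetric in $A$ and $B$, because the defining conjunction is commutative (equivalently, in the formalization, because the intersection $\mathtt{w\_subclasses}(A) \cap \mathtt{w\_subclasses}(B)$ is commutative). Since $\mathrm{mcs}(A,B) = \mathcal{C} \setminus \{C \mid \exists C' \in \mathcal{C},\ C' \leadsto^+ C\}$ is defined purely in terms of this symmetric set $\mathcal{C}$, swapping $A$ and $B$ leaves it unchanged. This gives the desired set equality and completes the argument.

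This lemma is essentially routine; the only point requiring care is that $\mathrm{join}$ is a genuine function of $\mathrm{mcs}(A,B)$---in particular that equal sets produce the same representative via $\mathrm{enum}$---so that set-level symmetry transfers to the join. Notably, no appeal to well-formedness (Definition~\ref{def:well-formed-hierarchy}) is needed for commutativity, in contrast to the associativity argument that will complete Theorem~\ref{thm:join-AC}.
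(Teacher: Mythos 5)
Your proof is correct and follows essentially the same route as the paper's: the paper's entire argument is that the join is defined symmetrically in its arguments via $\mathrm{mcs}$, and its \Coq{} proof is precisely a case split on $\top$ followed by a rewrite with commutativity of set intersection (\coqinline{setIC}) inside \coqinline{mcs}. Your additional observations---that equal sets yield equal enumerations and that well-formedness is not needed---are accurate and consistent with the formalization.
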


\begin{proof}
 Since the join is defined symmetrically with respect to the parameters in Definitions \ref{def:common-subclasses} and \ref{def:unique-join}, it is commutative. \qedhere
\end{proof}

\begin{coqcode}
Lemma joinC : commutative join.
Proof. by move=> [a|][b|] //=; rewrite /mcs setIC. Qed.
\end{coqcode}

\begin{lemma}
 \label{lem:join-inherit-lr}
 For any structures $A$ and $B$ in an extended hierarchy $\bar{\mathcal{H}}$, $\mathrm{join}(A, B)$ non-strictly inherits from both $A$ and $B$:
 \begin{align*}
  A \mathrel{\bar{\leadsto}}^* \mathrm{join}(A, B), &&
  B \mathrel{\bar{\leadsto}}^* \mathrm{join}(A, B).
 \end{align*}
\end{lemma}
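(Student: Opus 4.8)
The plan is to proceed by a direct case analysis on the value of $\mathrm{join}(A, B)$, following its definition on the extended hierarchy (Definition \ref{def:unique-join}). Since the two goals $A \mathrel{\bar{\leadsto}}^* \mathrm{join}(A, B)$ and $B \mathrel{\bar{\leadsto}}^* \mathrm{join}(A, B)$ are symmetric in $A$ and $B$, I would establish one of them for arbitrary $A, B$ and obtain the other by swapping the arguments and appealing to commutativity of the join (Lemma \ref{lem:joinC}); alternatively both can be discharged in parallel, since the argument is identical.

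First I would dispatch the cases where $\mathrm{join}(A, B) = \top$. This covers $A = \top$, $B = \top$, and the case where $A, B \in \mathcal{H}$ but $\mathrm{mcs}(A, B)$ is empty (in the formalization, the case where $\texttt{enum}(\mathrm{mcs}(A, B))$ is the empty list). In all of these, the goals reduce to $A \mathrel{\bar{\leadsto}}^* \top$ and $B \mathrel{\bar{\leadsto}}^* \top$, which hold unconditionally because $\top$ is the top element of the extended inheritance order (Definition \ref{def:extended-hierarchy}); formally this is immediate from the definition of $\texttt{le\_extS}$, whose right-$\texttt{None}$ clause returns $\texttt{true}$.

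The only substantive case is when $\mathrm{join}(A, B) = C$ for some concrete $C \in \mathcal{H}$, i.e.\ when $A, B \in \mathcal{H}$ and $C$ is the head of $\texttt{enum}(\mathrm{mcs}(A, B))$. Here I would first recover the membership $C \in \mathrm{mcs}(A, B)$: since $C$ is the head of the enumeration it belongs to that list, and $\texttt{enum}$ lists exactly the elements of the set, so $C \in \mathrm{mcs}(A, B)$ by $\texttt{mem\_enum}$. Unfolding $\mathrm{mcs}$ (Definition \ref{def:common-subclasses}), $\mathrm{mcs}(A, B)$ is a subset of the common subclasses $\mathcal{C} = \{D \mid A \leadsto^* D \land B \leadsto^* D\}$ --- it is obtained from $\mathcal{C}$ only by removing elements, never adding any --- so $A \leadsto^* C$ and $B \leadsto^* C$. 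Since $A, B, C$ are all distinct from $\top$, the extended relation coincides with the original one (third clause of Definition \ref{def:extended-hierarchy}), giving $A \mathrel{\bar{\leadsto}}^* C$ and $B \mathrel{\bar{\leadsto}}^* C$, as required.

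The main (and essentially only) obstacle is the bookkeeping of the first step of this last case, namely extracting set membership from the pattern match on the list $\texttt{enum}(\mathrm{mcs}(A, B))$, which relies on the correspondence between the enumeration and the underlying finite set; everything else is unfolding definitions and invoking the top-ness of $\top$. It is worth noting that this lemma does not use the well-formedness hypothesis $|\mathrm{mcs}(A, B)| \leq 1$ at all: it holds for the head of the enumeration regardless of whether the minimal common subclass is unique, because the head is always a member of $\mathrm{mcs}(A, B)$ and hence a common subclass of $A$ and $B$.
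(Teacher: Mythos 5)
Your proof is correct and follows essentially the same route as the paper: a case split on whether $\mathrm{join}(A,B)$ is $\top$ (trivial, since $\top$ is the top of $\bar{\leadsto}^*$) or a concrete $C \in \mathrm{mcs}(A,B)$, which is a common subclass of $A$ and $B$ by Definition~\ref{def:common-subclasses}; the formalization likewise derives the second goal from the first via \texttt{joinC} and extracts membership via \texttt{mem\_enum}. Your side remark that well-formedness is not needed is also consistent with the \Coq{} proof, which never invokes the \texttt{unique\_join} hypothesis.
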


\begin{proof}
 If $A = \top$, $B = \top$, or $\mathrm{mcs}(A, B)$ is not a singleton set, $\mathrm{join}(A, B) = \top$ holds; thus, it inherits from both $A$ and $B$.
 Otherwise, $\mathrm{mcs}(A, B)$ is a singleton set $\{C\}$ and $\mathrm{join}(A, B) = C$ holds. $C$ inherits from both $A$ and $B$ by Definition \ref{def:common-subclasses}. \qedhere
\end{proof}

\begin{coqcode}
Lemma join_inherit_l a b : (a <= join a b)
Proof.
case: a b=> [a|][b|] //=.
suff: forall c, c \in enum (mcs a b) -> (a <= c)
  by case: (enum _) => // c ? /(_ c); rewrite inE eqxx /=; exact.
by move=> c; rewrite mem_enum; case/setDP => /setIP [/=]; rewrite inE.
Qed.

Lemma join_inherit_r a b : (b <= join a b)
Proof. by rewrite joinC join_inherit_l. Qed.
\end{coqcode}

\begin{lemma}
 \label{lem:le-joinE}
 For any structures $A$, $B$, and $C$ in an extended well-formed hierarchy $\bar{\mathcal{H}}$, $C$ non-strictly inherits from $\mathrm{join}(A, B)$ if and only if $C$ non-strictly inherits from both $A$ and $B$:
 \[
  \mathrm{join}(A, B) \mathrel{\bar{\leadsto}}^* C \iff A \mathrel{\bar{\leadsto}}^* C \land B \mathrel{\bar{\leadsto}}^* C.
 \]
\end{lemma}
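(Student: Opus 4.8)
The plan is to prove the two implications separately, with the forward direction being a short consequence of the already-established Lemma~\ref{lem:join-inherit-lr} together with transitivity, and the backward direction carrying the real content. For the forward implication, I would assume $\mathrm{join}(A, B) \mathrel{\bar{\leadsto}}^* C$; by Lemma~\ref{lem:join-inherit-lr} we have $A \mathrel{\bar{\leadsto}}^* \mathrm{join}(A, B)$ and $B \mathrel{\bar{\leadsto}}^* \mathrm{join}(A, B)$, so transitivity of $\bar{\leadsto}^*$ immediately yields $A \mathrel{\bar{\leadsto}}^* C$ and $B \mathrel{\bar{\leadsto}}^* C$.

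For the backward implication, I would assume $A \mathrel{\bar{\leadsto}}^* C$ and $B \mathrel{\bar{\leadsto}}^* C$ and split on whether $C = \top$. If $C = \top$, then $\mathrm{join}(A, B) \mathrel{\bar{\leadsto}}^* \top$ holds unconditionally by Definition~\ref{def:extended-hierarchy}. Otherwise $C \in \mathcal{H}$; since $\top$ inherits only into $\top$, the hypotheses force $A \neq \top$ and $B \neq \top$, so $A, B \in \mathcal{H}$ with $A \leadsto^* C$ and $B \leadsto^* C$. Hence $C$ is a genuine common subclass, i.e. $C \in \mathcal{C}$ and $\mathcal{C} \neq \emptyset$, and the goal reduces to showing $\mathrm{join}(A, B) \leadsto^* C$.

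The crux is to produce a minimal common subclass lying below $C$ and then to identify it with the join using well-formedness. I would restrict attention to $\mathcal{C}_C := \{X \in \mathcal{C} \mid X \leadsto^* C\}$, which is finite and nonempty since it contains $C$. Using well-foundedness of the inverse strict inheritance relation $\leadsfrom^+$ (Lemma~\ref{lem:finltgt-wf}), I pick $D \in \mathcal{C}_C$ that is $\leadsto^+$-minimal in $\mathcal{C}_C$. A short transitivity argument shows $D$ is in fact minimal in all of $\mathcal{C}$: any $Y \in \mathcal{C}$ with $Y \leadsto^+ D$ would satisfy $Y \leadsto^* C$ and thus belong to $\mathcal{C}_C$, contradicting minimality there. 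By Definition~\ref{def:common-subclasses} this gives $D \in \mathrm{mcs}(A, B)$, so $\mathrm{mcs}(A, B)$ is nonempty, and by well-formedness (Definition~\ref{def:well-formed-hierarchy}, the hypothesis \coqinline{unique_join}) it is the singleton $\{D\}$. Definition~\ref{def:unique-join} then gives $\mathrm{join}(A, B) = D$, and since $D \leadsto^* C$ we conclude $\mathrm{join}(A, B) \mathrel{\bar{\leadsto}}^* C$.

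The main obstacle is precisely this last step: turning the set-theoretic minimality of $D$ into the comparability $D \leadsto^* C$. One might naively expect a minimal element of $\mathcal{C}$ to lie below every element of $\mathcal{C}$, but that fails in a general poset, so the argument genuinely needs both ingredients at once—well-foundedness to extract a minimal element below the specific $C$, and well-formedness to collapse all minimal common subclasses onto the single join. Getting the interplay of these two facts right, rather than any individual computation, is where the care lies.
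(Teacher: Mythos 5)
Your proof is correct, but the backward implication follows a genuinely different route from the paper's. The paper proves $C \in \mathcal{C} \Rightarrow \mathrm{join}(A,B) \mathrel{\bar{\leadsto}}^* C$ by well-founded induction on $C$ (via Lemma~\ref{lem:finltgt-wf}), splitting on whether $\mathrm{mcs}(A,B)$ is empty: in the empty case every element of $\mathcal{C}$ has a strict superclass in $\mathcal{C}$ and the induction hypothesis closes the (in fact vacuous) case by infinite descent, while in the singleton case $\mathcal{C}$ decomposes as $\{\mathrm{join}(A,B)\}$ together with the strict subclasses of elements of $\mathcal{C}$, and the induction hypothesis handles the latter. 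You instead avoid induction altogether: you extract a $\leadsto^+$-minimal element $D$ of $\mathcal{C}_C = \{X \in \mathcal{C} \mid X \leadsto^* C\}$, observe via transitivity that minimality in $\mathcal{C}_C$ already implies minimality in $\mathcal{C}$ (this is the step that needs care, and you handle it correctly), and then invoke well-formedness to conclude $\mathrm{mcs}(A,B) = \{D\}$ and $\mathrm{join}(A,B) = D \leadsto^* C$. Your argument is more direct and makes explicit a fact the paper's proof leaves implicit, namely that $\mathrm{mcs}(A,B)$ is automatically nonempty whenever $A$ and $B$ have any common subclass; the paper's induction, on the other hand, mirrors the structure of the mechanized Coq proof (\coqinline{elim/Acc\_ind}) and treats both cases with a single uniform induction hypothesis. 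The forward implication is identical in both proofs. One cosmetic remark: extracting a minimal element of the nonempty finite set $\mathcal{C}_C$ uses well-foundedness of the relation $Y \leadsto^+ D$ in $D$, which is exactly what Lemma~\ref{lem:finltgt-wf} provides (in a finite poset both it and its inverse are well-founded, so your citation is harmless either way).
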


\begin{proof}
 If $A$, $B$, or $C$ is $\top$, this proposition is trivial by Definitions \ref{def:extended-hierarchy} and \ref{def:unique-join}; also, the direct implication is trivial by Lemma \ref{lem:join-inherit-lr} and the transitivity of $\mathrel{\bar{\leadsto}}^*$.
 Thus it is sufficient to prove the converse implication for $A, B, C \in \mathcal{H}$.
 Let us name $\mathcal{C}$ the common subclasses of $A$ and $B$, which is $\{C \in \mathcal{H} \mid A \leadsto^* C \land B \leadsto ^* C\}$, and prove the following proposition by well-founded induction on $C \in \mathcal{H}$:
 \[
  C \in \mathcal{C} \Rightarrow \mathrm{join}(A, B) \mathrel{\bar{\leadsto}}^* C.
 \]
 If $\mathrm{mcs}(A, B)$ is empty, $\mathcal{C}$ is a subset of $\{C \in \mathcal{H} \mid \exists C' \in \mathcal{C}, C' \leadsto^+ C\}$ by Definition \ref{def:common-subclasses}; thus, $C' \leadsto^+ C$ holds for some $C' \in \mathcal{C}$ by the assumption $C \in \mathcal{C}$.
 By the induction hypothesis on $C'$, $\mathrm{join}(A, B) \mathrel{\bar{\leadsto}}^* C' ~ (\leadsto^+ C)$ holds.

 If $\mathrm{mcs}(A, B)$ is not empty, it is a singleton set $\{\mathrm{join}(A, B)\}$ because of the uniqueness of join; then, the following equation holds by Definition \ref{def:common-subclasses}:
 \[
  \mathcal{C} = \{\mathrm{join}(A, B)\} \cup \{C \in \mathcal{H} \mid \exists C' \in \mathcal{C}, C' \leadsto^+ C\}.
 \]
 By the assumption, $C$ inhabits $\{\mathrm{join}(A, B)\}$ or $\{C \in \mathcal{H} \mid \exists C' \in \mathcal{C}, C' \leadsto^+ C\}$. If $C = \mathrm{join}(A, B)$, the conclusion is true by reflexivity; otherwise, $C' \leadsto^+ C$ holds for some $C' \in \mathcal{C}$. By the induction hypothesis on $C'$, $\mathrm{join}(A, B) \mathrel{\bar{\leadsto}}^* C' ~ (\leadsto^+ C)$ holds. \qedhere
\end{proof}

\begin{coqcode}
Lemma le_joinE a b c : (join a b <= c)
Proof.
apply/idP/idP => [le_ab_c|/andP []].
  rewrite (le_trans (join_inherit_l _ _) le_ab_c).
  exact: (le_trans (join_inherit_r _ _) le_ab_c).
case: a b c => [a|][b|][c|] //; rewrite !le_extS_SomeE => le_ac le_bc.
set sub_ab := w_subclasses a :&: w_subclasses b.
have {le_ac le_bc}: c \in sub_ab by rewrite !inE; apply/andP; split.
elim/Acc_ind: c / (finlt_wf c) => c _ ihc hc.
move: (unique_join a b); rewrite leq_eqVlt ltnS leqn0 cards_eq0 setD_eq0.
case/orP=> [/cards1P [join_ab joinsE] | hsub]; last first.
  case/(subsetP hsub)/bigcupP: (hc) => y hy; rewrite inE => lt_yx.
  exact/(le_trans (ihc _ lt_yx hy))/ltW/lt_yx.
move: ihc hc; rewrite /join joinsE enum_set1 le_extS_SomeE => ihc.
suff ->: sub_ab = join_ab |: \bigcup_(c in sub_ab) s_subclasses c.
  rewrite 2!inE => /predU1P [-> // | /bigcupP [c' hc']].
  by rewrite inE => lt_c'c; apply/(le_trans _ (ltW lt_c'c))/ihc.
rewrite -joinsE /mcs setDE setUIl (setUC (~: _)) setUCr setIT.
apply/esym/setUidPl/subsetP => x /bigcupP [y]; rewrite !inE.
by case/andP=> [? ?] lt_yx; rewrite !(le_trans _ (ltW lt_yx)).
Qed.
\end{coqcode}

\begin{lemma}
 \label{lem:joinA}
 For an extended well-formed hierarchy, the join is associative:
 \[
  \forall A \, B \, C \in \bar{\mathcal{H}}, \mathrm{join}(\mathrm{join}(A, B), C) = \mathrm{join}(A, \mathrm{join}(B, C)).
 \]
\end{lemma}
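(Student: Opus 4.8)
The plan is to exploit the universal-property characterization of the join established in Lemma~\ref{lem:le-joinE}, which says that $\mathrm{join}(A,B)$ is precisely the least upper bound of $A$ and $B$ with respect to $\bar{\leadsto}^*$. Once the join is known to be a least upper bound, associativity is a purely order-theoretic consequence and requires no further reasoning about the combinatorics of minimal common subclasses. Concretely, to prove $\mathrm{join}(\mathrm{join}(A,B),C) = \mathrm{join}(A,\mathrm{join}(B,C))$ I would not manipulate the two sides directly, but instead compare the sets of structures they each inherit into.

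First I would fix an arbitrary $D \in \bar{\mathcal{H}}$ and compute, by two applications of Lemma~\ref{lem:le-joinE},
\[
\mathrm{join}(\mathrm{join}(A,B),C) \mathrel{\bar{\leadsto}}^* D
\iff (A \mathrel{\bar{\leadsto}}^* D \land B \mathrel{\bar{\leadsto}}^* D) \land C \mathrel{\bar{\leadsto}}^* D,
\]
and symmetrically
\[
\mathrm{join}(A,\mathrm{join}(B,C)) \mathrel{\bar{\leadsto}}^* D
\iff A \mathrel{\bar{\leadsto}}^* D \land (B \mathrel{\bar{\leadsto}}^* D \land C \mathrel{\bar{\leadsto}}^* D).
\]
The right-hand sides coincide because logical conjunction is associative; hence for every $D$ the propositions $\mathrm{join}(\mathrm{join}(A,B),C) \mathrel{\bar{\leadsto}}^* D$ and $\mathrm{join}(A,\mathrm{join}(B,C)) \mathrel{\bar{\leadsto}}^* D$ are equivalent.

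To turn this equivalence of up-sets into an equality I would use that $\bar{\leadsto}^*$ is a partial order. Writing $L$ and $R$ for the two sides, instantiating $D := L$ and using reflexivity ($L \mathrel{\bar{\leadsto}}^* L$) yields $R \mathrel{\bar{\leadsto}}^* L$, while instantiating $D := R$ yields $L \mathrel{\bar{\leadsto}}^* R$; antisymmetry then gives $L = R$. In the \Coq{} development this last step is the standard fact that an element of a poset is determined by its principal up-set.

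The main point I would emphasize is that there is essentially no remaining obstacle: all the difficulty is already discharged in Lemma~\ref{lem:le-joinE}, whose proof is where well-formedness (the uniqueness of minimal common subclasses) is genuinely used, via the well-founded induction on the strict inheritance relation supplied by Lemma~\ref{lem:finltgt-wf}. Associativity, by contrast, reduces to the observation that a least upper bound is associative because conjunction is, together with one appeal to antisymmetry; the $\top$ cases need no special handling since Lemma~\ref{lem:le-joinE} is stated uniformly over $\bar{\mathcal{H}}$.
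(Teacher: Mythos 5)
Your proof is correct and follows essentially the same route as the paper: both reduce associativity to the least-upper-bound characterization of Lemma~\ref{lem:le-joinE} together with antisymmetry and reflexivity of $\bar{\leadsto}^*$. The only difference is packaging --- you compare principal up-sets and specialize at the two sides, whereas the paper expands $L \mathrel{\bar{\leadsto}}^* R \land R \mathrel{\bar{\leadsto}}^* L$ into six clauses discharged by Lemma~\ref{lem:join-inherit-lr} and transitivity; these are the same argument.
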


\begin{proof}
 The above equation has the following equivalent formula:
 \begin{align*}
  & \mathrm{join}(\mathrm{join}(A, B), C) = \mathrm{join}(A, \mathrm{join}(B, C)) \\
  \Leftrightarrow{}
  & \mathrm{join}(\mathrm{join}(A, B), C) \mathrel{\bar{\leadsto}}^* \mathrm{join}(A, \mathrm{join}(B, C)) \land {} \\
  & \mathrm{join}(A, \mathrm{join}(B, C)) \mathrel{\bar{\leadsto}}^* \mathrm{join}(\mathrm{join}(A, B), C) \\
  \intertext{\raggedleft(antisymmetry and reflexivity of $\mathrel{\bar{\leadsto}}^*$)}
  \Leftrightarrow{}
  & A \mathrel{\bar{\leadsto}}^* \mathrm{join}(A, \mathrm{join}(B, C)) \land
    B \mathrel{\bar{\leadsto}}^* \mathrm{join}(A, \mathrm{join}(B, C)) \land {} \\
  & C \mathrel{\bar{\leadsto}}^* \mathrm{join}(A, \mathrm{join}(B, C)) \land
    A \mathrel{\bar{\leadsto}}^* \mathrm{join}(\mathrm{join}(A, B), C) \land {} \\
  & B \mathrel{\bar{\leadsto}}^* \mathrm{join}(\mathrm{join}(A, B), C) \land
    C \mathrel{\bar{\leadsto}}^* \mathrm{join}(\mathrm{join}(A, B), C)
 \end{align*}
 \begin{flushright}
  (Lemma \ref{lem:le-joinE}).
 \end{flushright}

 Each clause of the above formula is a corollary of Lemma \ref{lem:join-inherit-lr}. \qedhere
\end{proof}

\begin{coqcode}
Lemma joinA : associative join.
Proof.
move=> a b c; apply/eqP; rewrite (eq_le _ (join _ _));
  apply/andP; split; rewrite !le_joinE -?andbA; apply/and3P; split.
- exact/(le_trans (join_inherit_l _ _))/join_inherit_l.
- exact/(le_trans (join_inherit_r _ _))/join_inherit_l.
- exact: join_inherit_r.
- exact: join_inherit_l.
- exact/(le_trans (join_inherit_l _ _))/join_inherit_r.
- exact/(le_trans (join_inherit_r _ _))/join_inherit_r.
Qed.
\end{coqcode}

Theorem \ref{thm:join-AC} consists of Lemmas \ref{lem:joinxx}, \ref{lem:joinC}, and \ref{lem:joinA}; thus holds.

\begin{coqcode}
End join_structures_meta_properties.
\end{coqcode}

\section{Evolution of the hierarchy in \MC}
\label{sec:evolution-mathcomp}

In this appendix, we report in detail recent efforts on fixing and extending the hierarchy of structures in the \MC{} library.
In those improvements, our tools help us and other contributors to \MC{} to find inheritance bugs, moreover, improve the development process by reducing the reviewing and maintenance burden, and allow developers and contributors to focus on mathematical contents and other design issues.

\subsection{\MC{} 1.7.0}

\begin{figure*}[t]
 \centering
 \includegraphics[width=\textwidth,pagebox=cropbox]{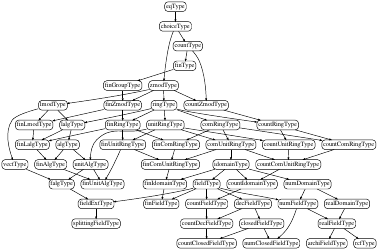}
 \caption{The hierarchy of structures in the \MC{} library 1.7.0}
 \label{fig:hierarchy-1_7_0}
\end{figure*}

We start with the hierarchy of \MC{} 1.7.0 which is depicted in Fig.~\ref{fig:hierarchy-1_7_0}, lacks 7 inheritance edges from the \coqinline{CountRing} structures to the \coqinline{FinRing} structures, and contains some other inheritance bugs.
Those missing edges do not mean just missing inference rules, but also breaks the well-formedness invariant.
For instance, \coqinline{countType} and \coqinline{zmodType} have two ambiguous joins \coqinline{countZmodType} and \coqinline{finZmodType} due to the missing edge from the former one to the latter one. The same issues can be seen in the \coqinline{ringType}, \coqinline{comRingType}, \coqinline{unitRingType}, \coqinline{comUnitRingType}, \coqinline{idomainType}, and \coqinline{fieldType} structures, and their corresponding countable and finite structures.
There are also the following 8 missing canonical projections:
\begin{itemize}
 \item the join of \coqinline{countClosedFieldType} and \coqinline{countIdomainType} which should be \coqinline{countClosedFieldType},
 \item the join of \coqinline{countType} and \coqinline{lmodType} which should be \coqinline{finLmodType},
 \item the join of \coqinline{countType} and \coqinline{lalgType} which should be \coqinline{finLalgType},
 \item the join of \coqinline{countType} and \coqinline{algType} which should be \coqinline{finAlgType},
 \item the join of \coqinline{countType} and \coqinline{unitAlgType} which should be \coqinline{finUnitAlgType},
 \item the join of \coqinline{finUnitRingType} and \coqinline{unitAlgType} which should be \coqinline{finUnitAlgType},
 \item the join of \coqinline{fieldType} and \coqinline{numDomainType} which should be \coqinline{numFieldType}, and
 \item the join of \coqinline{fieldType} and \coqinline{realDomainType} which should be \coqinline{realFieldType}.
\end{itemize}
The join of \coqinline{finType} and \coqinline{countType} should be \coqinline{finType}, but has been overwritten by the canonical \coqinline{finType} instance for \coqinline{extremal_group} as explained in Sect.~\ref{sec:evaluation}.

\subsection{\MC{} 1.8.0 and 1.9.0}

\begin{figure*}[t]
 \centering
 \includegraphics[width=\textwidth,pagebox=cropbox]{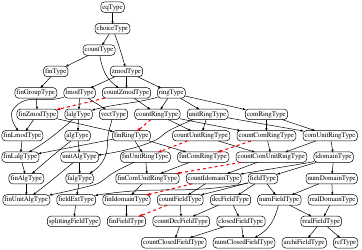}
 \caption{The hierarchy of structures in the \MC{} library 1.8.0 and 1.9.0.
 The bold dashed edges highlight new inheritance edges that are missing in 1.7.0.}
 \label{fig:hierarchy-1_8_0}
\end{figure*}

All the above inheritance issues in \MC{} 1.7.0 have been addressed in version 1.8.0 whose hierarchy of structures is depicted in Fig.~\ref{fig:hierarchy-1_8_0}. In this diagram, the bold dashed edges highlight new inheritance edges that are missing in 1.7.0.
Any implementation of structures and their inheritance has not been changed in \MC{} 1.9.0 from version 1.8.0.
Cohen~\cite{MathComp:PR209} made the initial attempt to add missing inheritances from \coqinline{CountRing} to \coqinline{FinRing} structures without any tooling work; however, this attempt was incomplete.
Then we made the second attempt to fix them with prior tooling work of Sect.~\ref{sec:validating-canonical-projections} by Cohen~\cite{MathComp:PR291}, which check only the existence of canonical projections but do not check which structure is inferred.
In this change, we added all the missing canonical projections thanks to the tool, but did not fix the canonical \coqinline{finType} instance for \coqinline{extremal_group} and also introduced a few misplaced canonical projections, because of the incompleteness of the checking.
Finally we made the initial complete implementation of the well-formedness checking~\cite{MathComp:PR318} as in Sect.~\ref{sec:validating-canonical-projections}, and fixed all the inheritance issues we discovered.

\subsection{\MC{} 1.10.0}

\begin{figure*}[t]
 \centering
 \includegraphics[width=\textwidth,pagebox=cropbox]{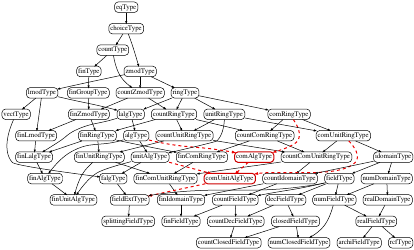}
 \caption{The hierarchy of structures in the \MC{} library 1.10.0.
 This diagram is identical with Fig.~\ref{fig:hierarchy} except that new structures and inheritance edges are highlighted by bold frames and bold dashed edges respectively.}
 \label{fig:hierarchy-1_10_0}
\end{figure*}

In \MC{} 1.10.0, Hivert~\cite{MathComp:PR406} added new structures \coqinline{comAlgType} and \coqinline{comUnitAlgType}, which are interfaces for non-unitary and unitary commutative algebras respectively.
The hierarchy of this version is depicted in Fig.~\ref{fig:hierarchy-1_10_0}, where the new structures and inheritance edges added are highlighted by bold frames and bold dashed edges respectively.
This extension requires to change the field extension (\coqinline{fieldExtType}) and splitting field (\coqinline{splittingFieldType}) structures to inherit from those commutative algebra structures, which involves some additions and deletions of joins; for instance, each join of \coqinline{lmodType}, \coqinline{lalgType}, \coqinline{algType}, \coqinline{unitAlgType}, and \coqinline{comRingType}, \coqinline{comUnitRingType} is \coqinline{fieldExtType} in 1.9.0, but should be generalized to one of the new commutative algebra structures.
In this extension process, they fixed some inheritance issues with help from us and our well-formedness checking tool; at the same time, we made sure there is no inheritance bug without reviewing the whole boilerplate code of implicit coercions and canonical projections.

\subsection{The current \texttt{master} branch of \MC{}}

In the current \texttt{master} branch of \MC{} which supposed to be released as a part of version 1.11.0, we ported the \textsf{order} sub-library of the \texttt{mathcomp-finmap} library~\cite{mathcomp-finmap-github} to \MC{}~\cite{MathComp:PR270,MathComp:PR453}.
The hierarchy of this version is depicted in Fig.~\ref{fig:hierarchy-master}, where the new structures and inheritance edges added are highlighted by bold frames and bold dashed edges respectively.
The \textsf{order} sub-library provides several structures of ordered types, including partially ordered types \coqinline{porderType}, (non-distributive) lattices \coqinline{latticeType}, distributive lattices \coqinline{distrLatticeType}, and totally ordered types \coqinline{orderType}; thus, this extension requires to change the numeric domain and numeric field structures~\cite[Chap.~4]{Cohen:phd}\cite[Sect.~3.1]{Cohen:2012} (\coqinline{numDomainType} and its subclasses) to inherit from those ordered type structures.
We also factored out normed Abelian groups from numeric domains as a new structure \coqinline{normedZmodType}, to unify several notions of norms in the \Analysis{} library~\cite[Sect.~4.2]{forgetful-inference}.
This refactoring work became approximately 10,000 lines of changes; thus, reducing the reviewing burden was an even more critical issue.

\begin{figure*}[t]
 \centering
 \includegraphics[width=\textwidth,pagebox=cropbox]{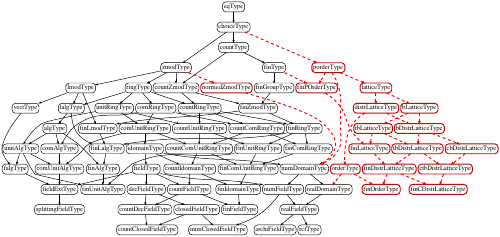}
 \caption{The hierarchy of structures in the \MC{} library (the \texttt{master} branch of the GitHub repository on Feb.~1, 2020, the commit hash is: \texttt{7d04173b52cf02717b8f8e8c13bb7c3521de7e89}).
 New structures and inheritance edges are highlighted by bold frames and bold dashed edges respectively.}
 \label{fig:hierarchy-master}
\end{figure*}

\end{document}